\theoremstyle{definition}
\newtheorem{definition}{Definition}
\theoremstyle{plain}
\newtheorem{lemma}{Lemma}
\newtheorem{theorem}{Theorem}
\theoremstyle{remark}
\newtheorem{remark}{Remark}
\newtheorem{example}{Example}
\newtheorem{problem}{Problem}
\newtheorem{corollary}{Corollary}
\renewcommand{\Re}{\mathbb{R}}
\newcommand{\Sph}{\mathbb{S}}
\newcommand{\grad}{\nabla}
\DeclareMathOperator{\bd}{bd}
\DeclareMathOperator{\org}{org}
\DeclareMathOperator{\dest}{dest}
\DeclareMathOperator{\inter}{int}
\DeclareMathOperator{\conv}{conv}
\DeclareMathOperator{\Eq}{Eq}
\begin{document}

\title{Morse\==Smale complexes on convex polyhedra}

\author[1,2]{Bal\'azs Ludm\'any}%\email{ludmany.balazs@cloud.bme.hu}
\author[1,3]{Zsolt L\'angi}%\email{zlangi@math.bme.hu}
\author[1,4]{G\'abor Domokos}%\email{domokos@iit.bme.hu}

\affil[1]{MTA-BME Morphodynamics Research Group, Budapest University of Technology and Economics, M\H uegyetem rakpart 1--3, Budapest, 1111, Hungary}
\affil[2]{Department of Control Engineering and Information Technology, Budapest University of Technology and Economics, Magyar Tud\'osok k\"or\'utja 2, Budapest, 1117, Hungary}
\affil[3]{Department of Algebra and Geometry, Budapest University of Technology and Economics, Egry J\'{o}zsef utca 1, Budapest, 1111, Hungary}
\affil[4]{Department of Mechanics, Materials \& Structures, Budapest University of Technology and Economics, M\H uegyetem rakpart 1--3, Budapest, 1111, Hungary}

\maketitle

\begin{abstract}
Motivated by applications in geomorphology, the aim of this paper is to extend Morse\==Smale theory from smooth functions to the radial
distance function (measured from an internal point), defining a convex polyhedron in $3$-dimensional Euclidean space. The resulting polyhedral Morse\==Smale complex may be regarded, on one hand, as a generalization of the Morse\==Smale complex of the smooth radial distance function defining a smooth, convex body, on the other hand, it could be also regarded as a generalization of the Morse\==Smale complex of the piecewise linear parallel distance function (measured from a plane), defining a polyhedral surface.  Beyond similarities, our paper also  highlights the marked differences between these three problems and it also relates our theory to other methods. Our work includes the design, implementation and testing of an explicit algorithm computing the Morse\==Smale complex on a convex polyhedron. 
\paragraph{keywords} Morse\==Smale complexes, polyhedral surfaces, static equilibrium points, radial function
\paragraph{MSC Classification} 57Q70, 52B70, 52B10
\end{abstract}

%\thanks{The research reported in this paper was supported by the BME Water Sciences \& Disaster Prevention TKP2020 Institution Excellence Subprogram, grant no. TKP2020 BME-IKA-VIZ, the NKFIH grant K134199, the J\'anos Bolyai Research Scholarship of the Hungarian Academy of Sciences, and the \'UNKP-20-5 New National Excellence Program by the Ministry of Innovation and Technology.}

\section{Introduction}

Morse theory, the investigation of the critical points of a smooth function defined on a smooth manifold, has been in the focus of research since the middle of the 20th century \cite{milnor}. This area has received especially much attention since the seminal paper of Witten \cite{witten} directly relating Morse theory to quantum field theory, and has gained applications in many disciplines both within and outside mathematics (cf. e.g. \cite{uhlenbeck}). 
Following the rapid development of computer graphics, the description of smooth functions on a smooth manifold was extended in several directions:
\begin{itemize}
    \item The \emph{discrete problem} \cite{forman, gyulassy}, which was used to analyze the shape of objects of many different sizes from molecules \cite{cazals2003molecular}.
    \item The \emph{piecewise linear problem} represented, for example, by polyhedral surfaces over the $[x,y]$ plane.
    In \cite{banchoff_polyhedral, edelsbrunner_morse_smale} the polyhedral surface has been equipped with the so-called  \emph{height function}, which, in many applications, e.g. to terrains \cite{edelsbrunner_morse_smale},  everyday objects \cite{feng2013feature, dong2006spectral} is the distance measured from a fixed Euclidean plane $\Re^2$. The restriction of the height function to a polyhedral surface is piecewise linear, so all critical points of the function are located at vertices and the algorithm presented in \cite{edelsbrunner_morse_smale} relies on this fact. We will briefly refer to this case as the \emph{piecewise linear} problem.
    \item The \emph{point set problem}, discussed in  \cite{Edelsbrunner2003, dey2003flow, giesen2003flow}, where  the space $\Re^3$ has been equipped with the function whose value is the distance to the nearest of a finite set of  points.  The main applications of this approach are surface reconstruction and segmentation.Here, the resulting function is neither smooth nor piecewise linear, but the authors present a way to define its gradient vector field and flow curves. The finite point set also defines a Delaunay diagram and its dual Voronoi diagram, critical points of the distance function are the intersection points of these two diagrams. The  flow complex encapsulates the subsets of $\Re^3$ covered by the flow lines ending at the same critical points. We are going to call this case the \emph{point set} problem. While the flow complex complex proved to be useful in several applications, its potential is restricted: as noted in \cite{Edelsbrunner2003}, the concept of the flow complex can not be used to analyze discrete systems (such as \cite{forman, gyulassy}).
\end{itemize}
 As we will point out (see below and also Remark~\ref{rem:flow_cmp}), neither of the above theories is applicable to the construction of Morse theory on a compact, polyhedral surface which we will call the \emph{polyhedral problem}. In this paper our goal is to analyze the latter, by equipping it with the so-called
\emph{radial distance function}, i.e. the distance measured from a fixed, internal reference point. The radial distance function is, like the function in the point set problem, neither smooth nor piecewise linear, leading to significant differences in comparison with both the smooth and piecewise linear case. In particular, critical points may appear not only at vertices (as in the piecewise linear case) or at smooth points (as in the smooth case), but both at vertices and in the interiors of edges and faces of the simplicial complex.

Despite these differences, the piecewise linear and the polyhedral problem are also closely related: in the polyhedral (radial) case, if we let the reference point approach infinity, the radial distance function will approach the parallel distance function and all critical points will approach vertices. (We remark that the transition between the polyhedral and piecewise linear case is far from trivial: if  the reference point is not in the interior but at finite distance from the polyhedron, then neither the polyhedral nor the piecewise linear theories will apply.)

While the piecewise linear and the polyhedral problems are mathematically closely related, the applications driving these two models are somewhat different. While parallel distance functions can be efficient
models in image processing and serve as good approximations in local cartography \cite{edelsbrunner_morse_smale}, radial distance functions are the only option in case of particle shape models in geomorphology and planetology \cite{asteroid, natural, pebble, ludmany}. Also, for global cartography,  spherical geometry has to be taken into account.
 
The relationship between  the polyhedral problem and the point set problem is nontrivial at first sight: one might be tempted to regard the former as a special case of the latter. Indeed, by  reflecting the reference point to the plane of every face, the resulting Voronoi diagram has the polyhedron as one of its cells.  Nevertheless, a significant difference between the two problems is that, unlike in the point set problem, in the polyhedral problem the curves of the flow are restricted to remain on  boundary of the polyhedron. In Remark~\ref{rem:flow_cmp} we present an example exploiting this difference and showing that the resulting flow complex might be different from the Morse\==Smale complex we are after.

Our current paper is primarily motivated by particle shape modelling in geomorphology. Traditional geological shape descriptors
(e.g. axis ratios, roundness) have been recently complemented \cite{natural, pebble} by a new class, called \emph{mechanical shape descriptors} which proved to be rather efficient not only in determining the provenance of sedimentary particles \cite{Szabo2018} but also served surprising well in the identification of the shapes of asteroids \cite{natural,Langi2019}. Mechanical descriptors rely on the radial distance function measured from the center of mass to which we briefly refer as the mechanical distance function. The number of stable, unstable and saddle-type critical points are called \emph{first order} mechanical descriptors \cite{pebble, varkonyi_gomboc} whereas Morse\==Smale complexes belong to \emph{second order} mechanical descriptors \cite{natural, domokos_morse_smale}. While the mathematical properties of Morse\==Smale complexes  associated with a smooth mechanical distance function have been investigated in detail \cite{holmes, domokos_morse_smale}, so far there has been little experimental data to test this theory. The main obstacle in obtaining field data is the identification of Morse\==Smale complexes
on scanned images of particles \cite{ludmany}. These scans can be regarded as compact, polyhedral surfaces which we associate with the radial (mechanical) distance function.

In this paper we present the mathematical background to describe these complexes leading, in a natural way, to an algorithm to perform the above task, i.e. to identify Morse\==Smale complexes on the surfaces of polyhedra equipped with the radial distance function. We remark that a Morse\==Smale complex on a surface is naturally associated to the topological graph whose vertices and edges are those of the complex. This graph, called Morse\==Smale graph, often appears in applications \cite{holmes}, and by the above correspondence, information on one provides equivalent information on the other one.

The structure of the paper is as follows. In Section~\ref{sec:smooth} we give a brief summary of the elementary properties of Morse\==Smale complexes defined by a smooth convex body in $\Re^3$. The next two sections deal with the main goal of the paper: by investigating the properties of the radial distance function defined on a polyhedral surface, we build up a theory leading to the definition of a Morse\==Smale complex and the related Morse\==Smale graph on the surface (cf. Definitions~\ref{defn:MorseSmale} and \ref{defn:MSgraph}).  These definitions may appear, at least at first glance, rather similar to their smooth counterparts (Definitions~\ref{defn:MS_smooth} and
\ref{defn:smooth_graph}). Also, as we will see, Morse\==Smale graphs on polyhedra and Morse\==Smale graphs on smooth
convex bodies share most, but not all of their combinatorial properties. Still,  there also exist fundamental and significant differences between the smooth, the piecewise linear and the polyhedral problems, so each of these cases require different tools leading to these definitions. We present our mathematical results in two sections: in particular, in Section~\ref{subsec:local}, we find an analogue of the gradient vector field on the polyhedron, and in Section~\ref{subsec:global} we define ascending integral curves generated by this vector field and explore their properties, yielding the notion of Morse\==Smale complex on a polyhedral surface. Then in Section~\ref{sec:algorithm} we present an algorithm that computes the Morse\==Smale complex generated by a generic convex polyhedron. Finally, in Section~\ref{sec:remarks} we collect our additional remarks and open questions.

In the paper we denote by $\Re^3$ the $3$-dimensional Euclidean space. We regard this space as a $3$-dimensional real vector space equipped with the usual Euclidean norm, denoted by $| \cdot |$  or $N(\cdot)$. We denote the origin of this space by $o$, and the unit sphere centered at $o$ by $\Sph^2 = \{ x \in \Re^3 : |x| = 1 \}$. We use the notation $\inter (\cdot)$ and $\partial$ to the interior and the boundary of a set, respectively, and for points $x,y \in \Re^3$, we denote the closed segment with endpoints $x,y$ by $[x,y]$.
Let $K$ be a convex body (i.e. a compact, convex set with nonempty interior) in $\Re^3$, containing $o$ in its interior. The \emph{radial function $\rho_K : \Sph^2 \to (0,\infty)$} of $K$ (cf. \cite{schneider}) is defined by
\[
\rho_K (x) = \max \{ \lambda : \lambda x \in K \}.
\]
This function, often appearing in the literature, is the reciprocal of the gauge function of $K$ \cite{schneider}, and it clearly determines the convex body $K$ \cite{varkonyi_gomboc}. 
We note that while our considerations are valid if $o$ is an arbitrary interior point of $K$ chosen as reference point, in applications in mechanics and geology, it usually coincides with the center of gravity of $K$.
                                                                                       
\section{Morse\==Smale complex generated by a smooth convex body}\label{sec:smooth}

In this section we consider only convex bodies with $C^{\infty}$-class boundary.
In this case the mapping $u \mapsto \rho_K(u)$ is a $C^{\infty}$-diffeomorphism between $\Sph^2$ and $\partial K$, and hence, we assume that $\rho_K$ is infinitely many times differentiable on $\Sph^2$. This diffeomorphism establishes a natural correspondence between topological properties of the Morse\==Smale complex on $\Sph^2$ defined by $\rho_K$ and the Morse\==Smale complex on $\partial K$ defined by Euclidean norm in the sense that central projection from $\Sph^2$ to $\partial K$ maps the first complex to the second one, and vice versa, making possible to introduce these complexes in two equivalent ways. In this section we follow the first approach as it is more convenient for those familiar with Morse theory. Nevertheless, in Sections~\ref{subsec:local} and \ref{subsec:global}, since our considerations are usually based on the geometric properties of the convex polyhedron under investigation, we use the second approach.

\begin{definition}\label{defn:critical}
A point $x \in \Sph^2$ is called a \emph{critical point} of $\rho_K$, if the derivative of $\rho_K$ at $x$ is zero, that is if $(\grad \rho_K)(x) = 0$.
A noncritical point of $\rho_K$ is called a \emph{regular point} of $\rho_K$. A value $\alpha \in \Re$ is called a \emph{critical value} of $\rho_K$ if there is a critical point of $\rho_K$ satisfying $\rho_K(x) = \alpha$. A critical point $x$ is \emph{nondegenerate}, if the Hessian of $\rho_K$ at $x$ is not singular.
\end{definition}

Recall the well-known fact that if the Hessian of $\rho_K$ at a nondegenerate critical point $x$ has $0$, $1$ or $2$ negative eigenvalue, then $x$ is a local maximum, a saddle point or a local minimum of $\rho_K$, respectively.

\begin{remark}\label{rem:equilibrium}
If $x$ is a critical point of $\rho_K$, then the point $\rho_K(x) x \in \partial K$ is usually called a (static) equilibrium point of $K$ (with respect to $o$). Furthermore, for any nondegenerate critical point $x$ of $\rho_K$, the point $\rho_K(x) x$ is called a \emph{stable}, or a \emph{saddle-type} or an \emph{unstable} equilibrium point of $K$ if $x$ is a local minimum, a saddle point or a local maximum of $\rho_K$, respectively \cite{domokos_morse_smale}.
\end{remark}

\begin{definition}\label{defn:Morse_function}
The function $\rho_K$ is called a \emph{Morse function} if it is $C^{\infty}$-class, and all its critical points are nondegenerate. 
\end{definition}

In this paper we assume that $\rho_K$ is a Morse function. This implies that any critical point of $\rho_K$ has a neighborhood in $\Sph^2$ that does not contain any other critical point, and thus, by the compactness of $\Sph^2$, $\rho_K$ has only finitely many critical points.

\begin{definition} \label{def:integral}
An \emph{integral curve} $c : \Re \to \Sph^2$ is a curve, maximal with respect to inclusion, whose derivative at every point coincides with the value of the gradient of $\rho_K$ at that point; that is, $c'(t) = (\grad \rho_K)(c(t))$ for all $t \in \Re$. We call $\org(c) = \lim_{t \to -\infty} c(t)$ the \emph{origin}, and $\dest(c)=\lim_{t \to \infty} c(t)$ the \emph{destination} of $c$ \cite{edelsbrunner_morse_smale}.
\end{definition}

We remark that our smoothness conditions and the Picard-Lindel\"of Theorem yield that any regular point $x$ belongs to exactly one integral curve, all integral curves consists of only regular points, and two integral curves are either disjoint or coincide. Furthermore, the origin and the destination of every integral curve are critical points of $\rho_K$ (cf. e.g. \cite{arnold_counting, edelsbrunner_morse_smale}).

\begin{definition}\label{defn:smoothmanifolds}
The \emph{descending (resp. ascending) manifold} of a critical point $x$, denoted by $D(x)$ (resp. $A(x)$) is the union of $x$ and all integral curves with $x$ as their destination (resp. origin).
\end{definition}

The function $\rho_K$ is called \emph{Morse\==Smale} if all ascending and descending manifolds of $\rho_K$ intersect only transversally; or equivalently, if any pair of intersecting ascending and descending $1$-manifolds cross.

\begin{definition}\label{defn:MS_smooth}
The cells of the \emph{Morse\==Smale complex generated by $\rho_K$} are the sets obtained by intersecting a descending and an ascending manifold. The $2$-, $1$-, and $0$-dimensional cells of this complex are called \emph{faces, edges}, and \emph{vertices}, respectively.
\end{definition}

Observe that if $x \neq y$ are critical points of $\rho_K$, then $A(x) \cap D(x) = \{ x \}$, and $A(x) \cap D(y)$ is the union of all integral curves with origin $x$ and destination $y$.
Clearly, the vertices of the Morse\==Smale complex of $\rho_K$ are the critical points of $\rho_K$, and if $x$ is a stable and $y$ is an unstable point, then $A(x) \cap D(y)$ is an open set in $\Sph^2$, implying that it is a face of the complex. On the other hand, the same does not hold if $x$ or $y$ is a saddle point since every saddle point is the origin, and also the destination, of exactly two integral curves, corresponding to edges of the complex. It is well known (cf. e.g. Lemma 1 in \cite{edelsbrunner_morse_smale}) that every face of the Morse\==Smale complex is bounded by four edges, and the four critical points in the boundary are a stable, a saddle, an unstable and a saddle point, in this cyclic order, where the two saddle points may coincide.

\begin{definition}\label{defn:smooth_graph}
The topological graph $G$ on $\Sph^2$, whose vertices are the critical points of $\rho_K$, and whose edges are the edges of the Morse\==Smale complex, is called the \emph{Morse\==Smale graph} generated by $\rho_K$ \cite{holmes}. This graph is usually regarded as a \emph{$3$-colored} quadrangulation of $\Sph^2$, where the `colors' of the vertices are the three types of a critical point.
\end{definition}

\section[Generating the gradient vector field on a polyhedral surface: local properties of the function~$N_{\partial P}$]{Generating the gradient vector field on a polyhedral surface: local properties of the function~$\boldsymbol{N_{\partial P}}$}\label{subsec:local}

In the remaining part of the paper we deal with a convex polyhedron $P \subset \Re^3$ containing $o$ in its interior.
We note that the concepts of equilibrium points and nondegeneracy of convex polyhedra are already established in the literature \cite{balancing, Conway}. The following definition is from \cite{balancing}.

\begin{definition}\label{equilibrium}
Let $P \subset \Re^3$ be a convex polyhedron and let $o \in \inter (P)$.  We say that $q \in \partial P$ is an \emph{equilibrium point} of $P$ (with respect to $o$) if the plane $H$ through $q$ and perpendicular to $[o,q]$ supports $P$ at $q$. In this case $q$ is \emph{nondegenerate}, if $H \cap P$ is the (unique) face, edge or vertex of $P$ that contains $q$ in its relative interior; here, by the relative interior of a vertex we  mean the vertex itself. A nondegenerate equilibrium point $q$ is called \emph{stable, saddle-type} or \emph{unstable}, if $\dim (H \cap P) = 2,1$ or $0$, respectively. We call $P$ \emph{nondegenerate} if all its equilibrium points are nondegenerate. The points of $\partial P$ which are not equilibrium points are called \emph{regular points} of $\partial P$.
\end{definition}

In our investigation, we deal only with nondegenerate convex polyhedra. For any set $S \subset \Re^3$, we denote the restriction of the Euclidean norm function onto $S$ by $N_S$, and denote the affine hull of $S$ by $<S>$. Note that since $\partial P$ is piecewise smooth, the gradient of the function $N_{\partial P}$ exists only at interior points of the faces of $P$. To extend this notion to nonsmooth points of $\partial P$, we intend to find the `direction and rate of steepest ascent' for all nonsmooth points $q$ among all unit tangent vectors at $q$. The aim of Section~\ref{subsec:local} is to generalize the notion of gradient vector field by following this approach, and examine the properties of our generalization. In Section~\ref{subsec:global} we generalize the notion of integral curves and Morse\==Smale complex for our setting.

\begin{definition}\label{gradient}
Let $q \in \partial P$. If, for some vector $v \neq o$, the half line $\{ q + t v: t \geq 0 \}$ intersects $\partial P$ in a nondegenerate segment starting at $q$, we say that $v$ is a \emph{tangent vector} of $P$ at $q$. If $X$ is a face or an edge of $P$ containing $q$, and the gradient $(\grad N_{<X>})(q)$ is a tangent vector of $P$ at $q$, we say that this vector is a \emph{candidate gradient} at $q$.
\end{definition}

Note that by definition, if $v = (\grad N_{<X>}) (q)$ is a candidate gradient at $q$ and $v_0 = \frac{v}{|v|}$, then the one-sided directional derivative $\lim_{t \to 0^+} \frac{| q + tv_0|}{t} = |v|$ is strictly positive. Furthermore, the maximum of the one-sided directional derivatives of $N_{\partial P}$ at $q$ is the maximum of the lengths of the candidate gradients at $q$.
The main result of this section is the following.

\begin{theorem}\label{gradient_existence}
Let $q \in \partial P$. If $q$ is an equilibrium point of $P$, then there is no candidate gradient at $q$. If $q$ is a regular point of $P$, then there is at least one candidate gradient at $q$, and there is a unique candidate gradient at $q$ with maximal length.
\end{theorem}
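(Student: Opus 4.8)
The plan is to translate the statement into elementary geometry of the convex polyhedron and the norm function, and to handle the two cases separately. Throughout, the key computational fact to exploit is that for any affine subspace (i.e.\ the affine hull $\langle X\rangle$ of a face or edge $X$ containing $q$), the gradient $(\grad N_{\langle X\rangle})(q)$ is the orthogonal projection of the outward radial direction onto $\langle X\rangle$; more precisely, since $N(y)=|y|$ has gradient $y/|y|$ in all of $\Re^3$, its restriction to the affine subspace $\langle X\rangle$ has gradient equal to the orthogonal projection of $q/|q|$ onto the linear direction space of $\langle X\rangle$. This projection viewpoint is what I expect to make every step transparent, since a candidate gradient is then just such a projected direction that additionally happens to point back into $\partial P$ (i.e.\ is a tangent vector at $q$).

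For the equilibrium case, I would argue by contradiction. Suppose $q$ is an equilibrium point and that some candidate gradient $v=(\grad N_{\langle X\rangle})(q)$ exists, where $X$ is the face or edge of $P$ whose relative interior contains $q$ (nondegeneracy lets us take $X$ to be exactly this carrier). By Definition~\ref{equilibrium}, the supporting plane $H$ through $q$ perpendicular to $[o,q]$ meets $P$ in $X$, so $\langle X\rangle\subseteq H$ and $[o,q]\perp\langle X\rangle$. Hence the radial direction $q/|q|$ is orthogonal to the direction space of $\langle X\rangle$, its projection onto that space is zero, and therefore $(\grad N_{\langle X\rangle})(q)=0$, contradicting the remark immediately preceding the theorem that a candidate gradient has strictly positive length $|v|>0$. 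I would also need to rule out candidate gradients coming from a higher-dimensional face $X'$ merely incident to $q$ but not carrying it; here I would note that for $q$ in the relative interior of $X$, any tangent vector at $q$ must be tangent to the carrier $X$, so the only relevant affine hulls are those of $X$ itself, and the same projection argument applies.

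For the regular case I must prove both existence and uniqueness of a maximal-length candidate gradient. For existence, since $q$ is regular its carrier $X$ (the face, edge, or vertex whose relative interior contains $q$) is such that $[o,q]$ is \emph{not} perpendicular to $\langle X\rangle$ when $\dim X\ge 1$, so the projection $(\grad N_{\langle X\rangle})(q)$ is nonzero; the main point is to verify that at least one such nonzero projected gradient is genuinely a tangent vector, i.e.\ that the steepest-ascent direction within some incident face or edge points into $\partial P$ rather than out of it. I expect this to be the technical heart of the argument: I would enumerate the carrier cases (interior of a face, interior of an edge, a vertex) and in each case show that among the finitely many incident faces and edges, the steepest admissible direction exists by convexity. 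For interior-of-face points the gradient of $N$ restricted to that face is itself tangent and nonzero, giving existence immediately. For edge and vertex points I would use the convexity of $P$ to show that the radial direction's projection onto at least one incident face points inward along the surface. For uniqueness of the maximal-length candidate, I would argue that the candidate gradients are finitely many (one per incident face or edge), and that two distinct candidates cannot share the maximal length: since the length of each candidate equals the one-sided directional derivative of $N_{\partial P}$ in its direction, and since $N_{\partial P}$ near a regular point is strictly convex along the sphere of tangent directions in a way that admits a unique steepest direction, ties would force a degeneracy contradicting regularity. The delicate step, and the one I would spend the most care on, is this uniqueness-of-maximizer claim: I would make it precise by expressing each candidate length in terms of the angle between $q/|q|$ and $\langle X\rangle$ and showing the maximizer is attained at a single incident cell, ruling out exact ties via the nondegeneracy hypothesis.
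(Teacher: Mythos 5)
Your projection viewpoint and the case split by the carrier of $q$ do match the paper's general strategy, but two of your key steps do not hold up. First, in the equilibrium case you dismiss candidate gradients coming from faces incident to $q$ but not carrying it by asserting that ``any tangent vector at $q$ must be tangent to the carrier $X$.'' That is false under Definition~\ref{gradient}: if $q$ lies in the relative interior of an edge $E$, the tangent vectors at $q$ include every direction pointing from $q$ into either adjacent face, not just directions parallel to $\langle E\rangle$, so you have not excluded $(\grad N_{\langle F_i\rangle})(q)$ as a candidate for an adjacent face $F_i$; and if $q$ is a vertex the carrier is $0$-dimensional, your zero-projection computation says nothing, and the claim becomes vacuous while plenty of tangent vectors exist. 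The argument that actually works uniformly (and is what the paper uses at vertices) is that at an equilibrium point the supporting plane perpendicular to $[o,q]$ forces every one-sided directional derivative of $N_{\partial P}$ at $q$ to be non-positive, whereas a candidate gradient by definition produces a strictly positive one.

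Second, your uniqueness argument for the maximal candidate is the more serious gap. You propose that a tie in length ``would force a degeneracy contradicting regularity,'' but nondegeneracy (Definition~\ref{equilibrium}) constrains where equilibrium points sit, not the lengths of candidate gradients, and there is no ``strict convexity along the sphere of tangent directions'' to invoke: on the cone of tangent directions the one-sided derivative is essentially a linear functional restricted to a polyhedral cone, which can perfectly well attain equal values on two extreme rays of a nondegenerate vertex. The paper's mechanism is different and has three parts: (i) at most one incident face can contribute a candidate gradient, because two such faces would each have to make a dihedral angle of at least $\pi/2$ with the plane through $o$ and their common line, contradicting convexity of $P$; (ii) if a face gradient is a candidate, convexity makes the one-sided derivative in every other tangent direction strictly smaller, so it is the unique maximizer; (iii) if only edge gradients are candidates and two of them, along edges with unit directions $v_{i_1}$ and $v_{i_2}$, tied for the maximum, then $v_{i_1}+v_{i_2}$ is still a tangent direction and the derivative there is strictly larger, so neither tied candidate was maximal --- using the identification of the maximal one-sided derivative with the maximal candidate length stated just before the theorem. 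Without some version of (i)--(iii), your uniqueness claim does not go through.
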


\begin{proof}
We distinguish three cases.

\textbf{Case 1}: $q$ is an interior point of a face $F$ of $P$. If $q$ is the orthogonal projection of $o$ onto the plane $<F>$, then $q$ is clearly a stable equilibrium point of $P$, and there is no candidate gradient at $q$. In the opposite case there is a unique candidate gradient at $q$, namely $(\grad N_{<F>}) (q)$, which clearly has maximal length among all candidate gradients.

\textbf{Case 2}: $q$ is a relative interior point of an edge $E$ of $P$. Let $F_1$ and $F_2$ denote the two faces of $P$ that contain $E$, and for $i=1,2$, let $H_i$ denote the open half plane in $<F_i>$ that is bounded by the line $<E>$, and whose closure contains $F_i$. Furthermore, let $o_i$ denote the orthogonal projection of $o$ onto the plane $<H_i>$, and let $o_E$ be the orthogonal projection of $o$ onto $<E>$.

Observe that $o_i \in E$ yields, by the convexity of $P$, that $o_i$ is a degenerate equilibrium point of $P$, which contradicts our assumptions. Thus, we have $o_1, o_2 \notin E$. Similarly, if $o_1 \notin H_1$ and $o_2 \notin H_2$, then the coplanarity of $o,o_1, o_2, o_E$ implies that the dihedral angle of $P$ at $E$ is $(\pi-(o_1o_Eo) + (\pi-o_2o_Eo)) > \pi$, which contradicts the convexity of $P$. Thus, we have $o_i \in H_i$ for at least one value of $i$. Since the gradient of $N_{<H_i>}$ at any $x \neq o_i$ points in the direction of $x-o_i$, this implies that at most one of $(\grad N_{<F_i>}) (q)$ is a candidate gradient.

First, assume that $o_1 \in H_1$ and $o_2 \in H_2$. Then $(\grad N_{<F_i>}) (q)$ is not a candidate gradient for $i=1,2$. On the other hand, $(\grad N_{<E>}) (q)$ is a tangent vector of $P$ at $q$ if and only if $q \neq o_E$. Thus, this vector is the unique candidate gradient for $q \neq o_E$, and there is no candidate gradient at $q$ if $q=o_E$.

Now we consider the case that exactly one of the $o_i$s lies in $H_i$; say, we have $o_1 \in H_1$ and $o_2 \notin H_2$. Then, since $o_2 \in <E>$ yields $o_2 \in <E> \setminus E$, $(\grad N_{<F_2>}) (q)$ is a tangent vector of $P$ at $q$. On the other hand, in this case $|(\grad N_{<E>}) (q)|$, if it exists, is a directional derivative of the function $N_{<F_2>}$ at $q$, implying that $|(\grad N_{<E>}) (q)| \leq |(\grad N_{<F_2>}) (q) |$, with equality if and only if $o_2 \in <E> \setminus E$ and $(\grad N_{<E>}) (q) = (\grad N_{<F_2>}) (q)$.

\textbf{Case 3}: $q$ is a vertex of $P$. Let $F_1, F_2, \ldots, F_k$ denote the faces of $P$ containing  $q$ such that $k \geq 3$, and for $i=1,2,\ldots,k$, the set $E_i= F_i \cap F_{i+1}$ is an edge of $P$. Then the edges of $P$ containing $q$ are $E_1, E_2, \ldots, E_k$. First, observe that if $q$ is an equilibrium point of $P$, then for every tangent vector $v$ of $P$ at $q$, the one-sided directional derivative of $N_{\partial P}$ in the direction of $v$ is negative, which yields that there is no candidate gradient at $q$. Thus, we may assume that $q$ is a regular point of $P$, which yields that $(\grad N_{<E_i>})(q)$ is a candidate gradient for at least one value of $i$.

We show that there is at most one value of $i$ such that $(\grad N_{<F_i>})(q)$ is a candidate gradient. Indeed, suppose for contradiction that $(\grad N_{<F_i>})(q)$ and $(\grad N_{<E_j>})(q)$ are candidate gradients at $q$, where $i \neq j$. Let $E = <F_i> \cap <F_j>$, and let $H_i$ (resp. $H_j$) denote the open half plane in $<F_i>$ (resp. $<F_j>$) that is bounded by the line $E$, and whose closure contains $F_i$ (resp. $F_j$). Similarly like in Case 2, we may observe that the fact that $(\grad N_{<F_i>})(q)$ and $(\grad N_{<F_j>})(q)$ are tangent vectors of $P$ at $q$ yields that the orthogonal projection of $o$ onto $<F_i>$ (resp. $<F_j>$) is not contained in $H_i$ (resp. $H_j$), and thus, the dihedral angle between $<\{o\} \cup E>$ and $H_i$ (resp. $H_j$) is at least $\frac{\pi}{2}$. But this contradicts the fact that the intersection of the two supporting half spaces of $P$, containing $F_i$ and $F_j$, respectively, is a convex polyhedral region.

Consider the case that for some value of $i$, $v=(\grad N_{<F_i>})(q)$ is a candidate gradient, i.e. it is a tangent vector of $P$ at $q$. Then, clearly, the convexity of $P$ implies that for any unit tangent vector $v' \neq \frac{v}{|v|}$ of $P$ at $q$, the one-sides directional derivative of $N_{\partial P}$ at $q$ in the direction of $v'$ is strictly less than $|v|$. Thus, in this case $v=(\grad N_{<F_i>})(q)$ has maximal length among all candidate gradients at $q$.

Finally, we deal with the case that $(\grad N_{<F_i>})(q)$ is not a tangent vector of $P$ at $q$ for any value of $i$, but there are more than one values of $i$ such that $(\grad N_{<E_i>})(q)$ is a candidate gradient of maximal length. Let $i_1 \neq i_2$ be such values. For $j=1,2$, consider the closed half line $R_{i_j} = \{ q+ t v_{i_j} : t \geq 0 \}$ where $v_{i_j}$ is the unit vector pointing from $q$ towards the other endpoint of $E_{i_j}$, and let $H= <E_{i_1} \cup E_{i_2}>$. Note that $|(\grad N_{<E_{i_j}>})(q)|$ is the directional derivative of $N_H$ in the direction of $v_{i_j}$, and hence, the equality $|(\grad N_{<E_{i_1}>})(q)| = |(\grad N_{<E_{i_2}>})(q)|$ implies that $(\grad N_H)(q)$ points in the direction of $v_{i_1}+v_{i_2}$. On the other hand, the half line $\{ q + t(v_{i_1}+v_{i_2}) : t \geq 0 \}$ intersects $P$ in a nondegenerate segment, and thus, there is a unit tangent vector $v$ of $P$ at $q$ such that the one-sided directional derivative of $N_{\partial P}(q)$ in the direction of $v$ is strictly greater than $|(\grad N_{<E_{i_j}>})(q)|$ ($j=1,2$), contradicting the assumption that the latter two vectors have maximal length.
\end{proof}

\begin{remark}\label{rem:connected_directions}
By the argument in the last paragraph, for any vertex $q$ of $P$, the set of tangent vectors $v$ of $P$ at $q$ with the property that in the direction of $v$ the one-sided derivative of the distance function increases is connected, implying the same statement for the set of tangent vectors in which the distance function decreases.
\end{remark}

\begin{definition}\label{defn:extended_gradient}
If $q$ is a regular point of $P$, then we call the unique candidate gradient at $q$ with maximal length the \emph{extended gradient} of $N_{\partial P}$ at $q$. If $q$ is an equilibrium point of $P$, we say that the extended gradient of $N_{\partial P}$ at $q$ is $0$. We denote the extended gradient of $N_{\partial P}$ at $q$ by $(\grad^\text{ext} N_{\partial P}) (q)$.
\end{definition}

Our analysis in Case 2 of the proof of Theorem~\ref{gradient_existence} leads to Definition~\ref{crossed}.

\begin{definition}\label{crossed}
Let $E$ be an edge of $P$, and let $F_1$ and $F_2$ be the two faces of $P$ containing $E$. For $i=1,2$, let $H_i$ denote the open half plane in $<F_i>$ that is bounded by the line $<E>$, and whose closure contains $F_i$, and let $o_i$ be the orthogonal projection of $o$ onto $<F_i>$. If $o_1 \in H_1$ and $o_2 \in H_2$, then for any regular point $q$ in the relative interior of $E$, the extended gradient at $q$ is parallel to $E$. In this case we say that $E$ is a \emph{followed edge} of $P$. If either $o_1 \notin H_1$ or $o_2 \notin H_2$, then we say that $E$ is a \emph{crossed} edge. In particular, if $E$ is a crossed edge with $o_1 \notin H_1$, then for every point $q$ in the relative interior of $E$, the extended gradient at $q$ points towards the interior of $F_1$. In this case we say that $E$ is \emph{crossed from $F_2$ to $F_1$}.
\end{definition}

\begin{figure}[h]
  \centering
  \begin{subfigure}{.45\textwidth}
    \centering
    \begin{tikzpicture}
      \draw (0,1) -- (-2,0) -- (0,-1) -- (2,0) -- cycle;
      \draw (-1.2,0.7) node {$F_2$};
      \draw (1.2,0.7) node {$F_1$};
      
      \draw[ultra thick] (0,1) -- (0,-1);
      \draw (0,1.2) node {$E$};
      
      \fill (-2,-1.5) circle (0.05);
      \begin{scope}
        \clip (0,1) -- (-2,0) -- (0,-1) -- cycle;
        \begin{scope}[xshift=-2cm, yshift=-1.5cm]
          \foreach \layer in {1,2,...,6} {
            \pgfmathsetmacro\first{40/\layer};
            \pgfmathsetmacro\last{360-\first};
            \foreach \angle in {0,\first,...,\last} {
              \draw[-latex] (\angle:0.5*\layer-0.4) -- ++(\angle:0.3);
            }
          }
        \end{scope}
      \end{scope}
      \draw (-1.8,-1.7) node {$o_2$};
      
      \fill (1,-1.3) circle (0.05);
      \begin{scope}
        \clip (0,-1) -- (2,0) -- (0,1) -- cycle;
        \begin{scope}[xshift=1cm, yshift=-1.3cm]
          \foreach \layer in {1,2,...,6} {
            \pgfmathsetmacro\first{40/\layer};
            \pgfmathsetmacro\last{360-\first};
            \foreach \angle in {0,\first,...,\last} {
              \draw[-latex] (\angle:0.5*\layer-0.4) -- ++(\angle:0.3);
            }
          }
        \end{scope}
      \end{scope}
      \draw (0.8,-1.5) node {$o_1$};
    \end{tikzpicture}
    \caption{Followed edge}
  \end{subfigure}
  \begin{subfigure}{.45\textwidth}
    \centering
    \begin{tikzpicture}
      \draw (0,1) -- (-2,0) -- (0,-1) -- (2,0) -- cycle;
      \draw (-1.2,0.7) node {$F_2$};
      \draw (1.2,0.7) node {$F_1$};
      
      \draw[ultra thick] (0,1) -- (0,-1);
      \draw (0,1.2) node {$E$};
      
      \fill (-2,-1.5) circle (0.05);
      \begin{scope}
        \clip (0,1) -- (-2,0) -- (0,-1) -- cycle;
        \begin{scope}[xshift=-2cm, yshift=-1.5cm]
          \foreach \layer in {1,2,...,6} {
            \pgfmathsetmacro\first{40/\layer};
            \pgfmathsetmacro\last{360-\first};
            \foreach \angle in {0,\first,...,\last} {
              \draw[-latex] (\angle:0.5*\layer-0.4) -- ++(\angle:0.3);
            }
          }
        \end{scope}
      \end{scope}
      \draw (-1.8,-1.7) node {$o_2$};
      
      \fill (-1,-1.3) circle (0.05);
      \begin{scope}
        \clip (0,-1) -- (2,0) -- (0,1) -- cycle;
        \begin{scope}[xshift=-1cm, yshift=-1.3cm]
          \foreach \layer in {1,2,...,6} {
            \pgfmathsetmacro\first{40/\layer};
            \pgfmathsetmacro\last{360-\first};
            \foreach \angle in {0,\first,...,\last} {
              \draw[-latex] (\angle:0.5*\layer-0.4) -- ++(\angle:0.3);
            }
          }
        \end{scope}
      \end{scope}
      \draw (-0.8,-1.5) node {$o_1$};
    \end{tikzpicture}
    \caption{Crossed edge}
  \end{subfigure}
  \caption{Possible types of edges. The extended gradient field on faces is also shown.}
\end{figure}
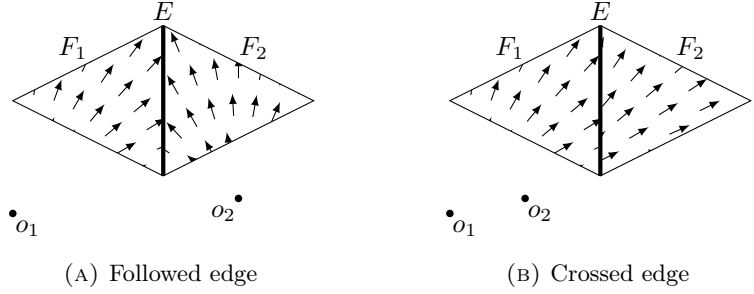

We need the following remark in Section~\ref{subsec:global}.

\begin{remark}\label{rem:cyclic_crossed}
Let $q$ be a vertex of  $P$ and let $F_1,F_2, \ldots, F_k$ be the faces of $P$ containing $q$ such that for $i=1,2,\ldots,k$, $E_i = F_{i-1} \cap F_i$ is an edge of $P$. Assume that all edges $E_i$ are crossed edges. Then there is some value of $i$ such that $E_i$ is crossed from $F_i$ to $F_{i+1}$, and a value of $i$ such that $E_i$ is crossed from $F_{i+1}$ to $F_i$. Indeed, if $d_i$ denotes the distance of $o$ and $<F_i>$, then the fact that $E_i$ is crossed from $F_i$ to $F_{i+1}$ implies that $d_i < d_{i+1}$. Thus, if $t$ is a value of $i$ such that $d_i$ is minimal, then $E_{t-1}$ is crossed from $F_t$ to $F_{t-1}$, and $E_t$ is crossed from $F_t$ to $F_{t+1}$.
\end{remark}

\begin{remark}\label{rem:extended_gradient}
As a summary, we collect the following rules to find the extended gradient at any given boundary point $q$ of $P$.
\begin{itemize}
\item If $q$ is an interior point of a face $F$, then $(\grad^\text{ext} N_{\partial (P)}) (q) = (\grad N_{<F>}) (q)$.
\item If $q$ is a relative interior point of an edge $E$ of $P$, where $F_1$ and $F_2$ denotes the two faces of $P$ containing $E$, then:
  \begin{itemize}
  \item if $E$ is a followed edge, then $(\grad^\text{ext} N_{\partial (P)}) (q) = (\grad N_{<E>}) (q)$;
  \item if $E$ is a crossed edge from $F_1$ to $F_2$, then $(\grad^\text{ext} N_{\partial (P)}) (q) = (\grad N_{<F_2>}) (q)$.
  \end{itemize}
\item If $q$ is a vertex of $P$, then:
  \begin{itemize}
	\item if $q$ is an unstable point of $P$, then $(\grad^\text{ext} N_{\partial (P)}) (q) = 0$;
  \item if there is a face $F$ containing $q$ such that $(\grad N_{<F>}) (q)$ is a tangent vector of $P$ at $q$, then $(\grad^\text{ext} N_{\partial (P)}) (q) = (\grad N_{<F>}) (q)$;
	\item in the remaining case there is a followed edge $E$ with endpoint $q$ such that $(\grad^\text{ext} N_{\partial (P)}) (q) = (\grad N_{<E>}) (q)$.
  \end{itemize}
\end{itemize}
It is also worth noting that for any face $F$, the gradient vector $(\grad N_{<F>})(q)$ is a positive scalar multiple of the vector pointing from the orthogonal projection of $o$ onto $<F>$ to $q$.
\end{remark}

The consequence of Remark~\ref{rem:extended_gradient} is that after the classification of edges as crossed or followed, we can always determine which candidate gradient is the extended gradient. There is no need to compute more than one candidate, which makes our algorithm quicker and also less prone to numerical errors.

\section[Morse\==Smale complex on a polyhedral surface: global properties of the function~$N_{\partial P}$]{Morse\==Smale complex on a polyhedral surface: global properties of the function~$\boldsymbol{N_{\partial P}}$}\label{subsec:global}

So far, we have defined a variant of the gradient vector field on the boundary of $P$. This vector field, as already $\partial P$, is not smooth, implying that the definition of integral curve cannot be applied to our case without any change. Our definition is as follows, where we denote the interval $(-\infty,0)$ by $\Re^-$.

\begin{definition}\label{defn:integral_curve}
An \emph{ascending curve} $c: \Re^-\to \partial P$ is a curve maximal with respect to inclusion, such that for any point $q=c(\tau)$ of $c$ the right-hand side derivative of $c$ is equal to the extended gradient at $q$, namely $c_+'(\tau)=(\grad^\text{ext} N_{\partial (P)}) (q)$. We call \(\lim_{\tau\to-\infty}c(\tau)\) the \emph{origin} and \(\lim_{\tau\to 0^-}c(\tau)\) the \emph{destination} of the ascending curve $c$.
\end{definition}

As we will see, the origin (resp. destination) of any ascending curve is a stable or a saddle point (resp. a saddle or an unstable point) of $P$. Thus, for any ascending curve $c$ with origin $s$, the first segment of $c$ is a segment on a line perpendicular to $[o,s]$, and, in particular, the derivative $c'(\tau)$ tends to zero as the point tends to $s$. From this, an elementary computation shows that the curve $c$ is parameterized on an interval unbounded from below. On the other hand, if $s$ is the destination of $c$, then the last segment in $c$ is one not perpendicular to $[o,s]$. Thus, $c$ is parameterized on an interval bounded from above. This shows that every ascending curve can be parameterized on the interval $\Re^-$ in a unique way.

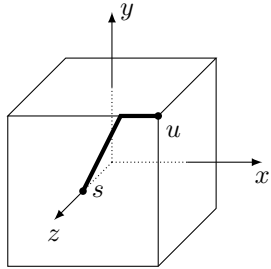
\begin{figure}[h]
  \centering
  \begin{tikzpicture}
    \draw[densely dotted] (0,0,0) -- (1,0,0);
    \draw[-latex] (1,0,0) -- (2,0,0);
    \draw[densely dotted] (0,0,0) -- (0,1,0);
    \draw[-latex] (0,1,0) -- (0,2,0);
    \draw[densely dotted] (0,0,0) -- (0,0,1);
    \draw[-latex] (0,0,1) -- (0,0,2);
    
    \draw (-1,1,-1) -- (-1,1,1) -- (1,1,1) -- (1,1,-1) -- cycle;
    \draw (-1,1,1) -- (-1,-1,1) -- (1,-1,1) -- (1,-1,-1) -- (1,1,-1);
    \draw (1,1,1) -- (1,-1,1);
    \draw[ultra thick] (0,0,1) -- (0.5,1,1) -- (1,1,1);
    \fill (0,0,1) circle (0.05);
    \fill (1,1,1) circle (0.05);
    \draw (0.2,0,1) node {$s$};
    \draw (1.2,0.8,1) node {$u$};
    \draw (2,-0.2,0) node {$x$};
    \draw (0.2,2,0) node {$y$};
    \draw (0,-0.2,2) node {$z$};
  \end{tikzpicture}
  \caption{One of the infinitely many ascending curves between the stable point $s$ and the unstable point $u$ on a cube}
\end{figure}

To study the properties of the ascending curves, we start with some preliminary observations.

\begin{remark}\label{rem:cover}
Clearly, any point of an ascending curve is regular. On the other hand,
to any regular point $q \in \partial P$, one can assign a `direction of greatest descent' as a direction in which the one-sided directional derivative $\lim_{t \to 0^+} \frac{| q + tv_0|}{t} = |v|$ is minimal among all unit tangent vectors $v_0$ of $P$ at $q$. Similarly like in case of extended gradient, it can be shown that this direction exists, with the corresponding derivative being strictly negative, and also that it is the opposite of a candidate gradient $(\grad N_{<X>}) (q)$ for some face or edge $X$ of $P$ containing $q$. This implies that every regular point of $P$ belongs to the relative interior of at least one ascending curve of $P$. Note that unlike in the smooth case, a regular point may belong to more than one ascending curve. This is true, for example, for any regular point in the relative interior of a followed edge, where, in general, three ascending curves meet.
\end{remark}

\begin{remark}\label{rem:merge}
By their definition, no two ascending curves can cross or split. On the other hand, as we have seen in Remark~\ref{rem:cover}, they can merge (cf. Figure~\ref{fig:merge}).
\end{remark}

\begin{figure}[h]
  \centering
  \begin{tikzpicture}
    \fill (0,0) circle (0.08);
    \fill[gray] (0,0.5) circle (0.08);
    \draw (0,1) -- (-2,0) -- (0,-1) -- (2,0) -- cycle;
    \draw (0,-1) -- (0,1);
    \begin{scope}
      \clip (0,1) -- (-2,0) -- (0,-1) -- (2,0);
      \draw[ultra thick, gray] (2,-1) -- (0,0.5) -- (0,1);
      \draw[very thick, densely dashed] (-2,-1) -- (0,0) -- (0,1);
    \end{scope}
    \node at (0.2,-0.6) {$E$};
    \node at (-0.5,0) {$c_1$};
    \node[gray] at (1,0) {$c_2$};
    \node at (0.3,0) {$p_1$};
    \node[gray] at (-0.3,0.5) {$p_2$};
  \end{tikzpicture}
  \caption{Two ascending curves $c_1$ and $c_2$ merging along a followed edge $E$ at $p_2$}\label{fig:merge}
\end{figure}
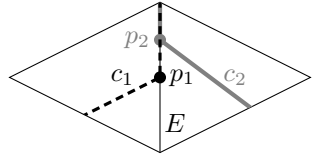

Note that every ascending curve is the union of some segments in the interiors of faces or on followed edges of $P$.
Our next theorem is an important tool to describe the geometric properties of such a curve.

\begin{theorem}\label{thm:notmultiple}
Let $S_1,S_2, \ldots,S_k$ be a sequence of consecutive segments in an ascending curve such that $S_i$ lies in the face or followed edge $X_i$ of $P$, where $X_i \neq X_{i+1}$. Let $d_i$ denote the distance of $<X_i>$ from $o$. Then $d_1,d_2,\ldots, d_k$ is a strictly increasing sequence.
\end{theorem}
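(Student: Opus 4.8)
The plan is to reduce the statement to a single local inequality at each transition, and then to verify that inequality case by case. It suffices to prove $d_i<d_{i+1}$ for every $i$, since strict monotonicity of the whole sequence then follows. Fix a transition and let $q$ be the common endpoint of $S_i$ and $S_{i+1}$; then $q$ lies in both affine hulls $<X_i>$ and $<X_{i+1}>$. For any face or edge $X$ containing $q$ write $o_X$ for the orthogonal projection of $o$ onto $<X>$. Since $o_X-o$ is perpendicular to $<X>$ while $q-o_X$ is parallel to it, Pythagoras gives $|q|^2=d_X^2+|q-o_X|^2$, and moreover $|(\grad N_{<X>})(q)|=|q-o_X|/|q|$. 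Consequently
\[
d_{i+1}^2-d_i^2=|q-o_{X_i}|^2-|q-o_{X_{i+1}}|^2 ,
\]
so that $d_i<d_{i+1}$ is \emph{equivalent} to the strict comparison $|(\grad N_{<X_i>})(q)|>|(\grad N_{<X_{i+1}>})(q)|$ of the two candidate gradients at the single point $q$: the incoming candidate gradient must be strictly longer than the outgoing one. The whole theorem thus reduces to this one local statement at each transition point.

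Next I would dispose of the two transitions that occur in the relative interior of an edge. If $q$ lies in the relative interior of a crossed edge $E$, then $X_i,X_{i+1}$ are the two faces meeting at $E$ and, since the curve ascends, $E$ is crossed from $X_i$ to $X_{i+1}$ in the sense of Definition~\ref{crossed}. Working in the plane perpendicular to $E$, exactly as in Remark~\ref{rem:cyclic_crossed}, I would write $d_i=d_E\sin\varphi_i$ and $d_{i+1}=d_E\sin\varphi_{i+1}$, where $\varphi_i,\varphi_{i+1}$ are the angles the radial direction makes with the two faces; the crossing condition forces $\varphi_i<\pi/2<\varphi_{i+1}$, convexity (dihedral angle below $\pi$) forces $\varphi_i+\varphi_{i+1}<\pi$, and these two facts give $\sin\varphi_{i+1}>\sin\varphi_i$, hence $d_i<d_{i+1}$. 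If instead $q$ lies in the relative interior of a followed edge $E$, then $X_{i+1}=E$ while $X_i$ is a face containing $E$, so $<E>\subset<X_i>$ and the projections are nested: $|q-o_{X_i}|^2=|q-o_E|^2+|o_{X_i}-o_E|^2$. Because $E$ is followed, $o_{X_i}$ lies in the \emph{open} half-plane bounded by $<E>$, so $o_{X_i}\notin<E>$ and $|o_{X_i}-o_E|>0$; this yields the strict inequality $|q-o_{X_i}|>|q-o_E|$, i.e. $d_i<d_{i+1}$.

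The remaining, and main, difficulty is the case in which $q$ is a vertex of $P$: here the outgoing piece $X_{i+1}$ is a face $F$ (or another followed edge), the hulls $<X_i>$ and $<F>$ need not be nested, and there is no direct Pythagorean identity. My plan is to combine maximality with convexity. By the note following Definition~\ref{gradient}, the extended gradient $v=(\grad N_{<F>})(q)$ realizes the maximum of the one-sided directional derivative $u\mapsto(q\cdot u)/|q|$ over all \emph{unit tangent} vectors $u$ of $P$ at $q$, so the outgoing length is $g_{i+1}=(q\cdot v)/|q|$; the incoming unit direction $w$ (the normalized gradient of $N_{<X_i>}$ at $q$) gives $g_i=(q\cdot w)/|q|$. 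The key preliminary is that $w$ points strictly \emph{out} of $P$: when $X_i$ is a followed edge, $w$ runs along $<E>$ past the extreme point $q$ and so leaves $P$, and when $X_i$ is a face a parallel convexity argument applies. Hence $w$ is not a tangent vector, the maximality of $v$ does not constrain it, and what must be shown is precisely that this outward direction $w$ is strictly more radially aligned than the best admissible tangent direction $v$. I would attack this through the supporting plane of $F$: writing $\hat n$ for its outward unit normal, one has $P\subset\{x\cdot\hat n\le d_F\}$, $q\cdot\hat n=d_F>0$, and $v\cdot\hat n=0$ while $w$ crosses outward through this plane; decomposing $q/|q|$ along $\hat n$ and within $<F>$, using that $v$ is confined to point into the face $F$ (and so captures only part of the in-plane radial component) while $w$ additionally gains from crossing $\hat n$, I expect to recover $g_i>g_{i+1}$, equivalently $d_i<d_{i+1}$. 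This is the step I anticipate to be the real obstacle: unlike the edge transitions it is not a one-line identity but a genuine convex-geometric estimate, and controlling which face $F$ the curve enters (via the connectedness of ascent directions, Remark~\ref{rem:connected_directions}) together with the monotonicity of face-distances across crossed edges around $q$ (Remark~\ref{rem:cyclic_crossed}) is where I expect the strictness to require the most care.
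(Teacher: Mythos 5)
Your reduction of the theorem to the single pointwise inequality $|(\grad N_{<X_i>})(q)|>|(\grad N_{<X_{i+1}>})(q)|$ at each transition point $q$ (via $|q|^2=d_{X}^2+|q-o_X|^2$) is correct, and your two edge\-/interior cases are complete and sound (modulo the harmless point that the crossing condition only gives $\varphi_{i+1}\ge\pi/2$, not $>$, which does not affect the conclusion). The genuine gap is the vertex case: everything after ``I would attack this through the supporting plane of $F$'' is written in the conditional (``I expect to recover $g_i>g_{i+1}$''), and you yourself flag it as the real obstacle. Since ascending curves do pass through vertices (indeed the proof of Theorem~\ref{thm:generic} shows every ascending curve leaving a saddle contains one), this is the heart of the statement, not a corner case. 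Moreover, the ingredients you plan to use there are not the operative ones: the fact that the incoming direction $\hat w$ exits $P$ at $q$ gives no lower bound on $q\cdot\hat w$ (plenty of outward directions at a vertex have small or negative radial component), and the maximality of $v$ among tangent vectors bounds the wrong side. What your sketch never uses is that $N$ increases along the \emph{entire} incoming segment $S_i$ up to $q$, equivalently where $o$ projects relative to the line $<S_i>$; without that, the comparison between a non\-/tangent direction and the tangential maximum has no leverage.

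The paper supplies exactly this missing ingredient, and with it handles all three transition types uniformly. It works in the plane $H$ spanned by $S_i\cup S_{i+1}$, with $o'$ the orthogonal projection of $o$ onto $H$. Because each $S_j$ follows the extended gradient, the plane through $<S_j>$ perpendicular to $<\{o\}\cup S_j>$ supports $P$ (for a face it is $<X_j>$ itself; for a followed edge this is precisely the ``followed'' condition); applied to both segments, this places $o'$ in the interior of the convex angular region at $q$ bounded by them. Combined with the monotonicity of $|\cdot|$, hence of the distance to $o'$, along $S_i\cup S_{i+1}$ — which locates the feet of the perpendiculars from $o'$ to $<S_i>$ and $<S_{i+1}>$ — one gets $d(o',<S_i>)<d(o',<S_{i+1}>)$, and then $d_i<d_{i+1}$ by Pythagoras using $d(o,<S_j>)=d_j$. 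In your own notation the same computation reads: with $\alpha,\beta$ the angles at $q$ between $[q,o']$ and the incoming and outgoing rays, interiority of $o'$ gives $\alpha+\beta<\pi$, and the desired inequality $(q-o')\cdot\hat w>(q-o')\cdot\hat v$ is $\cos\alpha+\cos\beta=2\cos\frac{\alpha+\beta}{2}\cos\frac{\alpha-\beta}{2}>0$. I would either adopt this single uniform argument or, at minimum, prove your vertex case along these lines rather than via the normal of the entered face.
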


\begin{proof}
For $i=1,2,\ldots,k-1$, let $H_i$ denote the plane spanned by $S_i \cup S_{i+1}$, where the existence of $H_i$ follows from the condition that $S_i$ and $S_{i+1}$ are not contained in the same face or edge. For $i=1,2\ldots,k$, let $o_i$ denote the orthogonal projection of $o$ onto $H_i$, and set $S_i = [q_{i-1},q_i]$. Let $C_i$ be the convex angular region bounded by the two half lines starting at $q_i$ and containing $S_i$ and $S_{i+1}$, respectively. By the definition of extended gradient, the plane through $<S_i>$ and perpendicular to the plane $<\{o \} \cup S_i>$ supports $P$. This implies that $o_i$ and $q_{i+1}$ lie in the same open half plane of $H_i$ bounded by $<S_i>$. We obtain similarly that $o_i$ and $q_{i-1}$ lie in the same open half plane of $H_i$ bounded by $<S_{i+1}>$. Thus, we have that $o_i$ is an interior point of $C_i$.

We show that for any value of $i$, the distance of $<S_i>$ from $o$ is equal to $d_i$. Indeed, this is trivial if $X_i$ is a followed edge. On the other hand, if $X_i$ is a face, then our observation follows from the fact that by the properties of extended gradient, the orthogonal projection $o_i$ of $o$ onto $<X_i>$ lies on $<S_i>$.

Now we prove the assertion. Let $1 \leq i \leq k-1$. Since $S_i$ and $S_{i+1}$ are consecutive segments on an ascending curve, moving a point $q$ at a constant velocity from $q_{i-1}$ to $q_{i+1}$ along $S_i \cup S_{i+1}$, the distance of $q$ from $o$ strictly increases. By the Pythagorean Theorem, the same property holds for the distance of $q$ and $o_i$. Thus, $q_i$ does not separate the orthogonal projection of $o_i$ onto $<S_i>$ from $q_{i-1}$, while it strictly separates the orthogonal projection of $o_i$ onto $<S_{i+1}>$  from $q_{i+1}$. Since $o_i \in C_i$, this yields that $o_i$ is strictly closer to $<S_i>$ than to $<S_{i+1}>$ (Figure~\ref{fig:segments} illustrates this arrangement). Thus, the inequality $d_i < d_{i+1}$ follows from the observation in the previous paragraph.

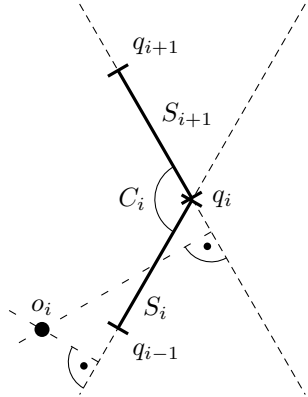
\begin{figure}[h]
  \centering
  \begin{tikzpicture}
    \draw[densely dashed] (120:3) -- (300:3);
    \draw[very thick, |-|] (0,0) -- (120:2);
    \draw (120:1)+(30:0.5) node {$S_{i+1}$};
    \draw (120:2)+(30:0.6) node {$q_{i+1}$};
    
    \draw[densely dashed] (60:3) -- (240:3);
    \draw[very thick, |-|] (0,0) -- (240:2);
    \draw (240:1.5)+(330:0.3) node {$S_i$};
    \draw (240:2)+(330:0.6) node {$q_{i-1}$};
    
    \draw (120:0.5) arc[start angle=120, end angle=240, radius=0.5];
    \draw (180:0.8) node {$C_i$};
    \draw (0:0.4) node {$q_i$};
    
    \draw[loosely dashed, name path=orthogonal 1] (300:0.5) -- ++(210:3);
    \draw (300:0.9) arc[start angle=300, end angle=210, radius=0.4];
    \fill (300:0.5)+(255:0.2) circle (0.05);
    
    \draw[loosely dashed, name path=orthogonal 2] (240:2.5) -- ++(150:1.5);
    \draw (240:2.9) arc[start angle=240, end angle=150, radius=0.4];
    \fill (240:2.5)+(195:0.2) circle (0.05);
    
    \fill[name intersections={of=orthogonal 1 and orthogonal 2}] (intersection-1) circle (0.1);
    \draw[name intersections={of=orthogonal 1 and orthogonal 2}] (intersection-1)+(90:0.3) node {$o_i$};
  \end{tikzpicture}
  \caption{Segments \(S_i\) and \(S_{i+1}\) of an ascending curve.}\label{fig:segments}
\end{figure}
\end{proof}

In the following we give an upper bound for the number of segments in any ascending curve.

\begin{theorem}\label{thm:polygonal}
Every ascending curve is a polygonal curve whose origin and destination are equilibrium points of $P$, and it intersects every face and edge of $P$ in at most one segment. 
\end{theorem}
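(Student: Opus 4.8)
The plan is to first establish that an ascending curve is genuinely polygonal, then to invoke Theorem~\ref{thm:notmultiple} to bound its number of segments and to forbid any face or edge from being used twice, and finally to identify the origin and destination as equilibrium points via maximality together with Remark~\ref{rem:cover}.

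First I would verify that the curve is straight inside each cell it enters. By Remark~\ref{rem:extended_gradient}, on the interior of a face $F$ the extended gradient at $q$ is a positive multiple of $q - o_F$, where $o_F$ is the orthogonal projection of $o$ onto $<F>$; hence the integral curves of this field are rays emanating from $o_F$, so the portion of an ascending curve $c$ lying in $\inter F$ is a single straight segment. Likewise, on a followed edge $E$ the extended gradient at $q$ is a multiple of $q - o_E$ directed along $<E>$, so the portion of $c$ on $E$ is straight as well. Since by Remark~\ref{rem:cover} every point of $c$ is regular, and a regular boundary point lies either in a face interior, in the relative interior of a followed edge, on a crossed edge, or at a regular vertex, the curve $c$ is a concatenation of such straight segments, broken only where it passes from one cell to another through a crossed edge or a vertex (on a crossed edge the extended gradient points into a face, so $c$ does not travel along it). Thus $c$ is polygonal; I write its maximal segments in order of increasing parameter as $S_1,S_2,\dots$, with $S_i \subset X_i$ for a face or followed edge $X_i$ and $X_i \neq X_{i+1}$.

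Next I would apply Theorem~\ref{thm:notmultiple}, which gives $d_1 < d_2 < \cdots$, where $d_i$ is the distance of $<X_i>$ from $o$. In particular the $d_i$ are pairwise distinct, so the supporting cells $X_i$ are pairwise distinct faces and followed edges of $P$. As $P$ has only finitely many faces and edges, an infinite strictly increasing sequence of $d_i$ drawn from this finite set of values is impossible, so the number $k$ of segments is finite and $c$ meets each face and each followed edge in at most one segment. A crossed edge, being crossed transversally, contributes only the single break point between two consecutive segments, and it can be crossed at most once: a second crossing of the same crossed edge would force one of its two adjacent faces to host a second segment, contradicting the distinctness of the $X_i$. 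This establishes the ``at most one segment'' assertion for every face and every edge.

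Finally, since $c$ is a finite polygonal curve, the limits $s = \lim_{\tau\to-\infty} c(\tau)$ and $t = \lim_{\tau\to 0^-} c(\tau)$ exist, being the initial endpoint of $S_1$ and the terminal endpoint of $S_k$, respectively. I claim both are equilibrium points. If $t$ were regular, then by Remark~\ref{rem:cover} it would lie in the relative interior of some ascending curve, so $c$ could be prolonged beyond $t$, contradicting maximality; hence $t$ is not regular, i.e.\ it is an equilibrium point, these being precisely the points where the extended gradient vanishes by Theorem~\ref{gradient_existence} and Definition~\ref{defn:extended_gradient}. The identical maximality argument applied to $s$ shows that the origin is an equilibrium point as well. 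The step I expect to require the most care is the uniform treatment of crossed edges and regular vertices within the segment decomposition, namely confirming that the break points are isolated and that the backward limit $\tau\to-\infty$ genuinely settles at the single point $s$; this is what the finiteness furnished by Theorem~\ref{thm:notmultiple} is designed to secure.
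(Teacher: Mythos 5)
Your proof is correct, but the route to polygonality/finiteness is genuinely different from the paper's. The paper first proves \emph{rectifiability} (inside each face the curve lies on rays through $o_F$ along which $N$ strictly increases, so the total length per face is bounded), then uses rectifiability to show that infinitely many components inside one face would have to accumulate at a vertex, and kills that spiral with Remark~\ref{rem:cyclic_crossed}; only afterwards does it invoke Theorem~\ref{thm:notmultiple}, and only for the ``at most one segment per cell'' claim. You instead extract \emph{everything} from Theorem~\ref{thm:notmultiple}: consecutive maximal segments have strictly increasing, hence pairwise distinct, distances $d_i$, so their supporting faces/followed edges are pairwise distinct, which bounds the number of segments by the number of cells of $P$ and gives the one-segment property in the same stroke, making rectifiability and Remark~\ref{rem:cyclic_crossed} unnecessary. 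This is a tidier argument, with one point that deserves to be made explicit (you flag it yourself at the end): enumerating the maximal segments as a consecutive sequence $S_1,S_2,\dots$ presupposes that break points do not accumulate, which is not known a priori. The fix is cheap -- any accumulation point would have to be a vertex, and the segments approaching it form arbitrarily long \emph{finite} consecutive runs cycling through the finitely many faces around that vertex, so Theorem~\ref{thm:notmultiple} already forbids it -- but as written this step is asserted rather than argued. Two further cosmetic points: ``the portion of $c$ lying in $\inter F$ is a single straight segment'' should read ``each connected component of $c\cap\inter F$ is a straight segment,'' since the uniqueness of that component is exactly what is being proved later; and your maximality argument for the endpoints (via Remark~\ref{rem:cover} and the vanishing of the extended gradient precisely at equilibria) is a welcome addition, as the paper's proof leaves that part of the statement implicit.
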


\begin{proof}
First, we show that every ascending curve is rectifiable. To do it, it is sufficient to observe that the intersection of every face $F$ of $P$ with an ascending curve is a union of segments contained on lines passing through the orthogonal projection $o_F$ of $o$ onto $<F>$. Since the Euclidean norm $N$ strictly increases along an ascending curve, this implies that the total length of these segments is not more than the distance of the farthest point of $F$ from $o_F$.

Next, we show that every ascending curve is the union of finitely many segments. To do it, it is sufficient to show that it intersects the interior
of every face of $P$ in finitely many segments. Suppose for contradiction that there is an ascending curve $c$ and a face $F$ such that $\inter(F) \cap c(\Re^-)$ has infinitely many components. Since $c(\Re^-)$ is rectifiable, this implies that $F$ has a vertex $q$ such that every neighborhood of $q$ contains infinitely many components of $\inter(F) \cap c(\Re^-)$. Also by rectifiability, this implies that for some closed interval $I$, the arc $c(I)$ starts at an interior point of $F$, crosses each edge of $P$ starting at $q$, and ends up at an interior point of $F$. Note that this implies that each edge of $P$ starting at $q$ is a crossed edge, and the direction of the crossing corresponds to the cyclic order of the faces around $q$. But this contradicts Remark~\ref{rem:cyclic_crossed}.

Finally, the fact that an ascending curve intersects every face and edge in at most one segment follows from Theorem~\ref{thm:notmultiple} and the fact that it is a polygonal curve.
\end{proof}

Now we are ready to define the ascending and the descending manifolds.

\begin{definition}\label{defn:ascending_manifold}
  The \emph{ascending (resp. descending) polyhedral manifold} of an equilibrium point $x$ is the union of $x$ and all ascending curves with $x$ as their origin (resp. destination).
	We denote the ascending (resp. descending) manifold of $x$ by $A(x)$ (resp. $D(x)$).
\end{definition}

Our first result is Theorem~\ref{thm:open}.

\begin{theorem}\label{thm:open}
If $P \subset \Re^3$ is a nondegenerate convex polyhedron, then for any unstable point $y \in \partial P$, the descending manifold $D(s)$ is open in $\partial P$.
\end{theorem}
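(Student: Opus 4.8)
I read the statement as asserting that the \emph{descending} manifold $D(y)$ of an unstable point $y$ is open in $\partial P$ (the symbol $A(s)$ being a slip for $D(y)$); the opposite assertion for ascending manifolds of stable points is false, as a cube already shows, so only this direction can hold. The plan is to prove that every point of $D(y)$ is interior to $D(y)$, and the natural first step is a local analysis at $y$. Since $y$ is a nondegenerate unstable point it is a strict local maximum of $N_{\partial P}$, and by Remark~\ref{rem:extended_gradient} the extended gradient at any nearby regular point $q$ is a positive multiple of $q-o_X$, where $X$ is the face or followed edge carrying $q$ and $o_X$ is the foot of the perpendicular dropped from $o$ onto $\langle X\rangle$. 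Using that $y$ is the locally farthest boundary point on every incident face and edge, a short computation shows that for all small enough $\varepsilon>0$ the set $W=\{q\in\partial P:|q-y|<\varepsilon\}$ satisfies $\langle(\grad^\text{ext} N_{\partial P})(q),\,y-q\rangle>0$ at every regular $q\in W\setminus\{y\}$; after shrinking $\varepsilon$ we may also assume that $y$ is the only equilibrium point in $W$. Then $|\,\cdot-y|$ strictly decreases along every ascending curve inside $W$, so no such curve can leave $W$; being polygonal with an equilibrium destination (Theorem~\ref{thm:polygonal}), it must terminate at $y$. Hence $W\subseteq D(y)$, and as $W$ is open, $y$ is interior to $D(y)$.

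Next I would handle an arbitrary $p\in D(y)$ with $p\neq y$, which is necessarily a regular point. By Theorem~\ref{thm:polygonal} the ascending curve $c$ issuing from $p$ is a finite polygonal curve ending at $y$, so it enters $W$; fix a regular point $p_0=c(\tau_0)\in\inter(W)$. The idea is to pull the open set $W$ back along the finitely many segments of $c$ joining $p$ to $p_0$, using that, away from vertices, the forward flow depends continuously on the initial point: on the interior of a face it is the smooth radial field emanating from $o_F$; across the relative interior of a crossed edge it is a transition depending continuously on the crossing point; and along the relative interior of a followed edge it is the merging, one-sided flow parallel to the edge. At each of these transitions the crossing or merging point of $c$ is interior to the edge and is \emph{not} the foot $o_E$ (which would be a saddle, forcing the destination of $c$ to be that saddle rather than $y$), so the transition varies continuously near $c$. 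Composing these finitely many continuous steps produces a neighborhood $U$ of $p$ every point of which is carried into $\inter(W)$ by the forward flow, whence $U\subseteq D(y)$ by the first paragraph.

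The transition I expect to be the genuine obstacle is the passage of $c$ through a \emph{regular vertex} $q'$ on its way to $y$, where the extended gradient field is discontinuous. A priori, two initial points on opposite sides of $c$ could be routed around $q'$ through different cyclic runs of incident faces and escape toward different destinations, which would break openness at $p$. Here the decisive tool is Remark~\ref{rem:connected_directions}: at $q'$ the tangent directions of strict ascent form a single connected arc of the link of $q'$, and so do those of strict descent. Thus every ascending curve that enters a small disk around $q'$ does so through the (connected) descent arc and leaves through the (connected) ascent arc while $N_{\partial P}$ increases monotonically; I would use this connectedness, together with the local convexity of $P$ at $q'$, to show that the resulting entry-to-exit correspondence is continuous across $c$, so no wrap-around is possible and the pullback above still delivers an open neighborhood. (Equivalently, since the forward flow assigns to each point a unique destination, the sets $D(z)$ partition $\partial P$, and the theorem amounts to the closedness of the union of the lower-dimensional pieces $D(z)$ over the saddles $z$ together with the stable points; the same vertex analysis is again the crux.)
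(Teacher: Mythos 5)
Your proposal follows essentially the same route as the paper's: you correctly read the statement as concerning $D(y)$, decompose the ascending curve into its finitely many segments via Theorem~\ref{thm:polygonal}, propagate neighborhoods backward segment by segment, and isolate the passage through a regular vertex as the only nontrivial transition. The vertex step you sketch is exactly what the paper's Lemma~\ref{lem:stability} proves in detail---using Remark~\ref{rem:connected_directions} and convexity to show that, apart from the edge or face carrying the extended gradient, every nondescending edge at the vertex is a crossed edge oriented toward it, so nearby ascending curves are funnelled onto the outgoing segment---while your explicit first step (that an open neighborhood of $y$ already lies in $D(y)$) only makes precise what the paper leaves implicit.
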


The proof of Theorem~\ref{thm:open} is based on Lemma~\ref{lem:stability}.

\begin{lemma}\label{lem:stability}
Let $P$ be nondegenerate, and let $q$ be a vertex of $P$ which is not an equilibrium point. Let $q'$ be a point on the ascending curve through $q$ such that $q'$ is not a vertex of $P$, and $[q,q']$ belongs to $\partial P$. Then, for every neighborhood $U$ of $q'$ in $\partial P$ there is a neighborhood $V \subset \partial P$ of $q$ such that the ascending curve through a point of $V$ intersects $U$.
\end{lemma}

\begin{proof}
First, we show that there is a unique candidate gradient at $q$. Let $F_1, \ldots, F_k$ denote the faces of $P$ containing $q$ in cyclic order. Let $E_i = F_{i-1} \cap F_{i}$ for $i=1,2,\ldots, k$. Let us call an edge $E_i$ \emph{descending} if the one-sided directional derivative at $q$ in the direction of $E_i$ is negative, and let us call it nondescending otherwise. We have seen in the proof of Theorem~\ref{gradient_existence} that if some $(\grad N_{<F_i>}) (q)$ is a candidate gradient, then $(\grad N_{<F_i>}) (q)$ is the extended gradient at $q$, and there is no candidate gradient generated by any other face $F_j$.
Now we show more, namely that apart from the face/edge generating the extended gradient, all nondescending edges are crossed edges. Before doing it, recall that by Remark~\ref{rem:connected_directions}, we may assume that the nondescending edges of $P$ at $q$ are $E_1, E_2, \ldots, E_j$ for some $1 \leq j < k$.

\emph{Case 1}: $(\grad^{ext} N_{\partial (P)}) (q) = (\grad N_{<F_i>}) (q)$ for some value of $i$. Then at least one of $E_i$ and $E_{i+1}$ is nondescending, and hence, we may assume that $1 \leq i \leq j$. First, observe that since $(\grad N_{<F_i>}) (q)$ points towards the interior of $F_i$, the lines through $E_i$ and $E_{i+1}$ separate this vector from the orthogonal projection of $o$ onto $<F_i>$. Thus, $E_i$ is a crossed edge from $F_{i-1}$ to $F_i$, and $E_{i+1}$ is a crossed edge from $F_{i+1}$ to $F_i$, respectively.

We show by induction on $t$ that $E_t$ is a crossed edge from $F_{t-1}$ to $F_t$, for $t=1,2,\ldots,i$. Note that we have already shown it for $t=i$. Assume that $E_{t+1}$ is a crossed edge from $F_t$ to $F_{t+1}$ for some $t \geq 1$. Let $o_t$ denote the orthogonal projection of $o$ onto $<F_t>$. The fact that $E_{t+1}$ is crossed from $F_t$ to $F_{t+1}$ implies that $<E_{t+1}>$ does not separate $o_t$ and $F_t$. Furthermore, the fact that both $E_{t+1}$ and $E_t$ are nondescending yields that the angle between $E_{t+1}$ and $[o_t,q]$, as well as the angle between $E_{t}$ and $[o_t,q]$, are both non-acute. But from this we have that $<E_t>$ separates $o_t$ and $F_t$, which yields that $E_t$ is crossed from $F_{t-1}$ to $F_{t}$ (see Figure~\ref{fig:crossed} for reference.). Repeating this argument for $i+1 \leq t \leq j$, we obtain that all nondescending edges at $q$ are crossed.

\begin{figure}[h]
  \centering
  \begin{tikzpicture}
    \draw[dashed] (150:2) -- (150:4) (0:0) -- (330:2) (180:2) -- (180:4) (0:0) -- (0:2);
    \draw (0:0) -- (120:2) -- (150:2) (0:0) -- (150:2) -- (180:2) (0:0) -- (180:2) -- (210:2) -- (0:0);
    \draw (60:0.2) node {$q$};
    \draw (150:4)++(60:0.2) node[rotate=-30, anchor=west] {$<E_{t+1}>$};
    \draw (180:4)++(90:0.2) node[anchor=west] {$<E_t>$};
    \draw (135:1.5) node {$F_{t+1}$};
    \draw (165:1.5) node {$F_t$};
    \draw (195:1.5) node {$F_{t-1}$};
    \fill (300:1) circle (0.05);
    \draw (300:1.3) node {$o_t$};
    \fill circle (0.05);
  \end{tikzpicture}
  \caption{A single step of the induction}\label{fig:crossed}
\end{figure}
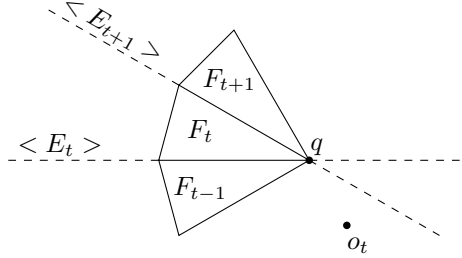

\emph{Case 2}: $(\grad^{ext} N_{\partial (P)}) (q) = (\grad N_{<E_i>}) (q)$ for some value of $i$. In this case a similar argument yields the desired statement.

We remark that the argument in Case 1 also shows that there are no two consecutive edges $E_t$, $E_{t+1}$ which are both descending, and $E_t$ is crossed from $F_t$ to $F_{t-1}$, and $E_{t+1}$ is crossed from $F_t$ to $F_{t+1}$. This observation, combined with the properties of nondescending edges proved above, yields the assertion.
\end{proof}

\begin{proof}[Proof of Theorem~\ref{thm:open}]
Let $c : \Re^- \to \partial P$ be an ascending curve with the unstable point $y$ as its destination, and let $q = c(\tau)$ be a point of $c$. We need to show that $q$ has a neighborhood $V$ such that for any point $q' \in V$, the destination of the ascending curve through $q'$ is $y$. By Theorem~\ref{thm:polygonal},  $c([\tau,0)) \cup \{ y \}$ is the union of finitely many closed segments such that
\begin{enumerate}
\item[(i)] each segment is a subset of a face or a followed edge of $P$
\item[(ii)] the endpoint of each segment, apart from $q$, is either a relative interior point of an edge, or a vertex of $P$.
\end{enumerate}
Thus, it is sufficient to show that if $[x_j,x_{j+1}]$ is one of these segments with $x_j$ preceding $x_{j+1}$ on $c$, then for every neighborhood $U$ of $x_{j+1}$
there is a neighborhood $V$ of $x_j$ such that for any $q' \in V$, the ascending curve through $q'$ intersects $U$. This statement is trivial if $x_j$ is a relative interior point of an edge. Assume that $x_j$ is a vertex, and let $x'$ be a relative interior point of $[x_j,x_{j+1}]$. Then, by the above observation, for any neighborhood $U$ of $x_j$ there is a neighborhood $V'$ of $x'$ such that any ascending curve intersecting $V'$ intersects $U$, and by Lemma~\ref{lem:stability}, the same statement holds for $x_j$ and $x'$ in place of $x'$ and $x_{j+1}$, respectively.
\end{proof}

\begin{definition}\label{defn:MorseSmale}
The intersection of an ascending and a descending polyhedral manifold, i.e. a set of the form $A(x) \cap D(y)$ for some (not necessarily distinct) equilibrium points $x,y$ of $P$, is called a \emph{cell} of the Morse\==Smale complex of $P$. The \emph{Morse\==Smale complex of $P$} is defined as the family consisting of its cells.
\end{definition}

\begin{definition}
  An ascending curve is \emph{isolated} if it does not belong to both a descending and an ascending polyhedral manifold.
\end{definition}

Recall that for generic smooth convex bodies, isolated integral curves correspond to integral curves starting or ending at saddle points of the body.
The following example shows that the same statement does not hold for every nondegenerate convex polyhedron $P$.

\begin{example}\label{ex:badguy}
Let $H^+$ and $H^-$ be two closed half planes in $\Re^3$ such that their intersection is the line $L=\{ (1,t,0) : t \in \Re \}$, they are symmetric to the $(x,y)$-plane, and their convex hull $P_0$ contains $o$. Let $s_1=(1,0,0)$, and let $H$ be a plane parallel to the $z$-axis such that it intersects the $x$-axis at a point $(1+\varepsilon,0,0)$ with some sufficiently small $\varepsilon > 0$, and passes through the point $q=(1,1,0)$. Let $s_2=(x,y,0)$ be a point of $H$ with $y > 1$, and let $H'$ be a plane containing the line through $s_2$, parallel to the $z$-axis, such that $s_2$ is a saddle point of the truncation of $P_0$ with $H$ and $H'$. Then the polygonal curve $[s_1,q] \cup [q,s_2]$ is an ascending curve from $s_1$ to $s_2$, corresponding to a saddle-saddle connection. Furthermore, if $o_+$ is the orthogonal projection of $o$ onto $H^+$, then $\conv \{ o_+, s_1, q \}$ is contained in $D(s_2)$.
\end{example}

\begin{figure}[h]
  \centering
  \begin{tikzpicture}
    \fill[lightgray] (0.5,0.5,0) -- (-1.5,3.5,-1) -- (-1.5,3.5,1) -- (0.5,-2.5,0) -- cycle;
    \draw (0.5,0.5,0) -- (-1.5,3.5,-1);
    \draw[densely dotted] (0,0,0) -- (0.5,0,0);
    \draw[-latex] (0.5,0,0) -- (3,0,0);
    \draw[densely dotted] (0,0,0) -- (0,1.3,0);
    \draw[-latex] (0,1.3,0) -- (0,3,0);
    \draw[densely dotted] (0,0,0) -- (0,0,0.5);
    \draw[-latex] (0,0,0.5) -- (0,0,3);
    \draw (0.5,0.5,0) -- (0.5,-2.5,0);
    \draw (0.5,0.5,0) -- (-1.5,3.5,1);
    \draw (-1.5,3.5,1) -- (0.5,-2.5,0);
    %\draw[dashed] (-1.5,3.5,-1) -- (0.5,-2.5,0);
    \draw (-1.5,3.5,-1) -- (-1.5,3.5,1);
    \draw[very thick] (0.5,0,0) -- (0.5,0.5,0) -- (-1.5,3.5,0);
    \fill (0.5,0,0) circle (0.05);
    \fill (-1.5,3.5,0) circle (0.05);
    \draw (0.7,-0.2,0) node {$s_1$};
    \fill (0.5,0.5,0) circle (0.05);
    \draw (0.7,0.5,0) node {$q$};
    \draw (-1.7,3.7,0) node {$s_2$};
    \fill (0.25,0,0.5) circle (0.05);
    \draw (0.5,-0.2,0.5) node {$o_+$};
    \draw (3,-0.2,0) node {$x$};
    \draw (0.2,3,0) node {$y$};
    \draw (0,-0.2,3) node {$z$};
  \end{tikzpicture}
  \caption{The polyhedron bounded by one possible choice of $H^+,H^-,H$ and $H'$ in Example~\ref{ex:badguy}.}
\end{figure}
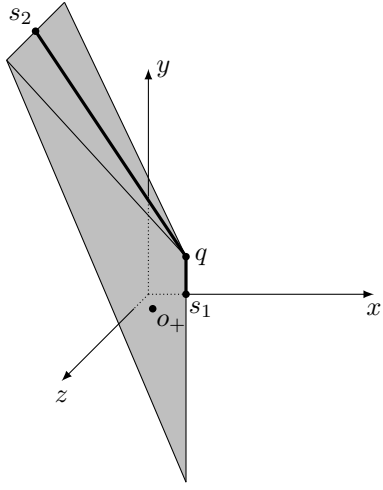

\begin{remark}\label{rem:whatisMS}
Clearly, for any nondegenerate convex polyhedron $P$ every saddle point $s$ is a relative interior point of a followed edge $E$, implying that all ascending curves ending at $s$ reach it in one of the two directions perpendicular to $E$, and all ascending curves starting at $s$ start along the edge $E$. Thus, for any nondegenerate convex polyhedron, the ascending and the descending manifolds at any saddle point cross.
\end{remark}

Based on Example~\ref{ex:badguy} and Remark~\ref{rem:whatisMS}, we introduce the following notion. Before that, we recall that a \emph{simplicial} convex polyhedron $P \subset \Re^3$ is defined as a convex polyhedron having only triangular faces, and observe that if $P = \conv \{ q_1, q_2, \ldots, q_k \}$ is a nondegenerate convex polyhedron with equilibrium points $s_1, s_2, \ldots, s_m$, and for any sufficiently small perturbation $P'= \conv \{ q'_1, q'_2, \ldots, q'_k \}$, with the point $q_i'$ being the perturbation of $q_i$, $P'$ is a nondegenerate convex polyhedron with $m$ equilibrium points $s'_1,s'_2,\ldots, s'_m$, and for a suitable choice of indices, $s'_j$ has the same type as that of $s_j$, and their distance is small.

\begin{definition}\label{defn:generic}
Let $P \subset \Re^3$ be a simplicial, nondegenerate convex polyhedron, and let the set of equilibrium points of $P$ be $\Eq(P)$. If, for any vertex $q$ of $P$ with $q \in A(x) \cap D(y)$ and $x,y \in \Eq(P)$, and any sufficiently small perturbation $P'$ of $P$, we have $q' \in A(x') \cap D(y')$, where $q'$ is the vertex of $P$ corresponding to $q$, and $x',y'$ are the equilibrium points of $P'$ corresponding to $x$ and $y$, respectively, then we say that $P$ is \emph{generic}, or \emph{Morse\==Smale}.
\end{definition}

\begin{theorem}\label{thm:generic}
Let $P \subset \Re^3$ be a generic polyhedron. Then the following holds:
\begin{enumerate}
\item[(i)] the descending manifold of any saddle point $s$ of $P$ contains no vertex of $P$;
\item[(ii)] the descending manifold of any saddle point consists of exactly two ascending curves;
\item[(iii)] the $2$-dimensional cells of the Morse\==Smale complex of $P$ are bounded by four ascending curves, and the four endpoints of these curves are a stable, a saddle, an unstable and a saddle point in this cyclic order.
\item[(iv)] the isolated ascending curves are exactly those starting or ending at a saddle point;
\end{enumerate}
\end{theorem}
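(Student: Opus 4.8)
The plan is to first record the local picture at a saddle and then prove the four assertions in the order (i), (ii), (iv), (iii), since the quadrilateral statement (iii) is the most delicate and is cleanest to deduce once the behaviour of ascending curves near saddles is fully controlled. By Remark~\ref{rem:whatisMS}, every saddle $s$ is a relative interior point of a followed edge $E$; the function $N_{\partial P}$ increases along $E$ in both directions away from $s$ and decreases into the two incident faces $F_1,F_2$. Hence the ascending curves originating at $s$ leave along $E$ (one in each direction), while those terminating at $s$ arrive transversally to $E$, one through each $F_i$ along the ray emanating from the orthogonal projection $o_i$ of $o$ onto $<F_i>$. This already exhibits $A(s)$ and $D(s)$ as built from \emph{two} distinguished curves each; the work is to verify these are the only ones and to control the vertices they meet.

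For (i) I argue by contraposition against Definition~\ref{defn:generic}. Suppose a vertex $v$ lies on the descending manifold $D(s)$ of a saddle $s$; then the ascending curve through $v$ has destination $s$, so the forward orbit of $v$ reaches the saddle $s$. Since $v$ is neither the origin nor the destination of this curve (origins are stable points or saddles, lying in face interiors or edge interiors, while the destination $s$ is not a vertex), the curve must cross an edge of $P$ exactly at $v$; equivalently, the ray carrying the final segment toward $s$ through some $F_i$, traced backwards, meets an edge precisely at $v$. This is a single incidence condition (a collinearity of $o_i$, $s$, $v$, or the analogous incidence one face earlier), hence of codimension at least one in the space of perturbations $P'$ of $P$. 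A generic perturbation moves the crossing off $v'$, so the forward orbit of $v'$ no longer reaches the saddle $s'$ but terminates at an unstable point, giving $v'\notin D(s')$ and violating the stability required by Definition~\ref{defn:generic}. Thus no such $v$ exists. The same mechanism forbids saddle--saddle connections: such a connection, traced to its destination saddle $s$, would run along a followed edge and pass through that edge's endpoint vertex $w$, placing $w\in D(s)$ and contradicting (i). Consequently every ascending curve has origin a stable point or a saddle, and destination a saddle or an unstable point, but never both.

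With (i) in hand, (ii) is immediate: through each incident face $F_i$ the only ascending curve reaching $s$ is the one along the ray $o_i\to s$, and by (i) its backward continuation crosses only \emph{crossed} edges in their relative interiors, where by Definition~\ref{crossed} the predecessor face is uniquely determined, so no merging in the sense of Remark~\ref{rem:merge} can occur; hence exactly two maximal ascending curves terminate at $s$. The mirror argument along $E$ shows $A(s)$ likewise consists of exactly two ascending curves, each running to an unstable point (they cannot end at another saddle, by the absence of saddle--saddle connections). For (iv) I combine this with the trichotomy above, reading the defining condition as belonging to a descending manifold of an unstable point and to an ascending manifold of a stable point (the two-dimensional ones). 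A curve lies in a two-dimensional descending manifold $D(y)$ exactly when its destination $y$ is unstable, and in a two-dimensional ascending manifold $A(x)$ exactly when its origin $x$ is stable; so it belongs to both precisely when its origin is stable and its destination unstable, in which case it fills the interior of a two-dimensional cell and is not isolated. Every remaining curve has a saddle as origin or destination and lies in the one-dimensional manifold $A(s)$ or $D(s)$, hence fails to belong to both two-dimensional manifolds and is isolated. This is exactly the assertion of (iv).

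Finally, (iii) --- the expected main obstacle --- adapts the quadrangle lemma of the smooth theory (Lemma~1 of \cite{edelsbrunner_morse_smale}) to the polyhedral setting, the difficulty being that the ascent/descent duality exploited there is absent, so the two kinds of saddle separatrices (the pair in $A(s)$ running along $E$ and the pair in $D(s)$ arriving transversally to $E$) must be handled by different geometric mechanisms. I would fix a two-dimensional cell $C=A(x)\cap D(y)$ with $x$ stable and $y$ unstable; by Theorem~\ref{thm:open} and its ascending counterpart $C$ is open in $\partial P$, and as a point moves in $C$ its orbit retains origin $x$ and destination $y$ until it meets a separatrix through a saddle, so the relative boundary $\partial C$ is a union of the edges singled out in (ii) and its mirror. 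The four separatrices at a saddle (two from $D(s)$, two from $A(s)$) partition a punctured neighbourhood of the saddle into four sectors carrying alternating origin--destination labels, and following $\partial C$ through these sectors yields exactly four edges meeting the corners in the cyclic order stable, saddle, unstable, saddle, with the two saddles possibly coinciding just as in the smooth case. The care points I anticipate are verifying that $\partial C$ closes up after precisely four edges and treating the coincident-saddle degeneration; both reduce to the four-fold alternation at a saddle, which the structure established in (i)--(ii) supplies.
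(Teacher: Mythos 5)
Your treatment of (i), (ii) and (iv) follows essentially the same route as the paper: (i) is proved by perturbing the offending vertex so that the collinearity with the relevant orthogonal projection of $o$ is destroyed and the curve misses the saddle, contradicting Definition~\ref{defn:generic}; (ii) follows from backward uniqueness through crossed edges once (i) rules out vertices; and the absence of saddle--saddle connections is derived, as in the paper, from the fact that curves ending at a saddle contain no vertices while curves starting at one must pass through the endpoint of the followed edge. Your phrasing of the incidence in (i) as ``codimension at least one'' is looser than the paper's explicit identification of the \emph{last} vertex on the curve (after which all breakpoints lie on relative interiors of crossed edges, so the perturbed curve is determined and demonstrably misses $s$), but the mechanism is the same.

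The argument for (iii), however, has a genuine gap. You deduce that a $2$-cell $C=A(x)\cap D(y)$ is open ``by Theorem~\ref{thm:open} and its ascending counterpart,'' and you likewise assert that an orbit retains its origin $x$ as the base point moves. But the ascending counterpart of Theorem~\ref{thm:open} is false: the paper explicitly observes (first remark of Section~\ref{sec:remarks}, Figure~\ref{fig:open}) that even generic convex polyhedra can have stable points whose ascending manifolds are \emph{not} open, precisely because ascending curves can merge. This is not a removable convenience in your argument --- the same merging phenomenon means a face of the complex need not be homeomorphic to a disc, so your plan of ``following $\partial C$ through the four sectors at each saddle and checking it closes up after four edges'' cannot be carried out as stated. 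The paper circumvents this by a one-sided argument: it removes the separatrix skeleton (the curves starting or ending at saddles), observes that each complementary component $R$ is open by construction and is bounded by $4k$ arcs whose corner types cycle as stable, saddle, unstable, saddle, and then forces $k=1$ using only the openness of $D(y)$ for unstable $y$ (Theorem~\ref{thm:open}) together with the connectedness of $R$ --- no statement about ascending manifolds is needed. You should replace your symmetric open-cell argument with such a one-sided one, or otherwise supply an argument for the corner count that does not presuppose that $C$ is open or a disc.
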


\begin{proof}
Let $P$ be generic. Suppose for contradiction that there is an ascending curve $c$ of $P$ ending at a saddle point $s$ that contains a vertex. Let us denote the open segments of $c(\Re^-)$ by $S_1, S_2, \ldots, S_k$ in consecutive order. Let the endpoints of $S_i$ be $q_i$ and $q_{i+1}$. By the properties of saddle points, $S_k$ lies in the interior of a face of $P$. Thus, $q_k$ is  either a vertex or a relative interior point of a crossed edge. In the latter case $S_{k-1}$ lies in the interior of a face of $P$. Again, $q_{k-1}$ is either a vertex, or a relative interior point of a crossed edge of $P$. Repeating this argument we obtain that the last vertex $q_i$ of $P$ on $c$ has the property that for all $i < j \leq k$, the point $q_j$ is a relative interior point of a crossed edge of $P$. Now, let $F'$ be the face containing the segment $[q_i,q_{i+1}]$, and let $o'$ denote the orthogonal projection of $o$ onto $<F'>$. Let us perturb $q_i$ by moving it in the plane $<F'>$ to a point $q'_i$ such that $q_i$, $q_i'$ and $o'$ are not collinear. Then it is easy to see that the ascending curve through $q'_i$ on the perturbed polyhedron meets the edge of $s$ at a point different from $s$, and thus, it cannot end at $s$. But this contradicts our assumption that $P$ is generic, implying (i).

Now, assume that $P$ is a simplicial convex polyhedron with the property that the descending manifold of any saddle point $s$ contains no vertex other than $s$.
Note that any saddle point $s$ is a relative interior point of a followed edge. Thus, there are exactly two ascending curves starting at $s$, and they run on the edge of $s$ until they reach the endpoints of this edge, implying that any ascending curve starting at a saddle point contains a vertex of $P$. On the other hand, the ascending curves ending at $s$ reach it along the two segments in $\partial P$ perpendicular to the edge of $P$. By our assumption and the consideration in the previous paragraph, these segments belong to unique ascending curves whose all vertices lie on crossed edges of $P$. Thus, for any saddle point $s$ there are exactly two ascending curves starting at $s$, and these contain vertices of $P$, and exactly two ascending curves ending at $s$, and these do not contain vertices of $P$. This implies, in particular, that the other endpoints of these curves are not saddle points. Thus, the ascending curves starting at $s$ end at unstable, and the ones ending at $s$ start at stable points of $P$. From this, (ii) readily follows.

Note that since ascending curves cannot split, the two ascending curves ending at $s$ are necessarily disjoint, but the two curves starting at $s$ may merge later.
Nevertheless, the above facts are sufficient to show that, after removing all these curves from $\partial P$, any component is bounded by four arcs of such curves, whose endpoints are a stable, a saddle, an unstable and a saddle point in this cyclic order. Clearly, any such region $R$ is bounded by $4k$ arcs for some $k \in \mathbb{Z}^+$, and the sequence of the types of the endpoints of the corresponding curves are stable, saddle, unstable and saddle repeated $k$ times. Observe that by its definition, $R$ is an open subset of $\partial P$. On the other hand, by Theorem~\ref{thm:open} and since $R$ is open, $R \cap D(y)$ is an open subset of $R$ for any unstable point $y$ being the destination of an ascending curve in $\partial P$. Thus, the connectedness of $R$ yields that $k=1$ and (iii) holds, implying also (iv).
\end{proof}

\begin{corollary}
By Theorem~\ref{thm:generic}, the Morse\==Smale complex of a generic convex polyhedron $P$ consists of the following types of cells.
\begin{enumerate}
    \item The equilibrium points of $P$, called the \emph{vertices} of the Morse\==Smale complex.
    \item The closures of the isolated ascending curves, namely the curves starting or ending at a saddle point. These cells are called \emph{edges} of the complex.
    \item for a stable point $s$ and an unstable point $u$, the union of all ascending curves starting at $s$ and ending at $u$, and also the four isolated ascending curves intersecting the boundary of this union in an arc. These cells are called \emph{faces} of the complex.
\end{enumerate}
We note that a face of a Morse\==Smale complex is not necessarily homeomorphic to a disc. This happens if two ascending curves in the boundary of the face merge. The existence of such a face in a Morse\==Smale complex is equivalent to the property that the complex is not a CW-decomposition of $\bd P$.
\end{corollary}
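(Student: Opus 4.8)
The plan is to derive the corollary directly from Theorem~\ref{thm:generic} by first describing the ascending and descending polyhedral manifolds of each of the three types of equilibrium point, and then intersecting them in every combination allowed by Definition~\ref{defn:MorseSmale}. Since, by the discussion following Definition~\ref{defn:integral_curve}, the destination of every ascending curve is a saddle or an unstable point, no ascending curve can end at a stable point $s$; hence $D(s)=\{s\}$, whereas $A(s)$ is the two-dimensional region swept out by the curves emanating from $s$. The mirror statement (there being no genuine duality in this setting) gives $A(u)=\{u\}$ and a two-dimensional $D(u)$ for an unstable point $u$. For a saddle point $\sigma$, Remark~\ref{rem:whatisMS} together with Theorem~\ref{thm:generic}(ii) shows that exactly two ascending curves end at $\sigma$ and exactly two start at it, so both $D(\sigma)$ and $A(\sigma)$ are one-dimensional, each a union of two arcs.

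With these descriptions fixed, I would compute $A(x)\cap D(y)$ for all type-pairs $(x,y)$. Every diagonal intersection $A(x)\cap D(x)$ collapses to $\{x\}$: this is immediate for stable and unstable points, and for a saddle $\sigma$ it holds because the two curves ending at $\sigma$ reach it perpendicularly to its edge while the two starting at $\sigma$ leave along that edge, so they meet only at $\sigma$ (Remark~\ref{rem:whatisMS}); this gives the vertices of item~(1). Next, $A(\sigma)\cap D(u)$ and $A(s)\cap D(\sigma)$ single out the ascending curves that start or end at a saddle, which by Theorem~\ref{thm:generic}(iv) are exactly the isolated ascending curves; their closures are the edges of item~(2). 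The remaining off-diagonal combinations are vacuous or reduce to a vertex: there is no curve joining two distinct stable points or two distinct unstable points, and, as shown inside the proof of Theorem~\ref{thm:generic}(ii), no ascending curve joins two saddles. Finally, for a stable $s$ and an unstable $u$, the set $A(s)\cap D(u)$ is the two-dimensional union of all ascending curves from $s$ to $u$, which by Theorem~\ref{thm:generic}(iii) is bounded by four such curves in the stable-saddle-unstable-saddle cyclic order; adjoining these four bounding edges yields the face of item~(3). This exhausts all cells.

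The substantive part, and the step I expect to be the main obstacle, is the closing remark on discs and CW-decompositions. Here I would study the topology of a face by viewing $A(s)\cap D(u)$ as the region traced by its one-parameter family of ascending curves from $s$ to $u$. By Remark~\ref{rem:merge} these curves can neither cross nor split but may merge, and when none of them merge the four bounding arcs form a simple closed curve on the topological sphere $\partial P\cong\Sph^2$, so the Jordan-Schoenflies theorem identifies the closed face with a disc. If, on the other hand, two of the bounding curves merge, the boundary of the face is no longer a simple closed curve and, as I would argue by tracking the swept region past the merge point, the face fails to be a disc (for instance it becomes an annular region). To convert this into the stated equivalence I would use that a partition of the closed surface $\partial P$ into points, embedded arcs and regions is a CW-decomposition precisely when each two-dimensional region is an (open) disc; the zero- and one-cells are automatically admissible, since every edge is an embedded polygonal arc by Theorem~\ref{thm:polygonal}. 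Hence the complex fails to be a CW-decomposition exactly when some face is not a disc, i.e.\ exactly when a merge occurs. The delicate point is to pin down rigorously the homeomorphism type of a merged face and to confirm that merging is the only possible obstruction to the disc property, and this is where the argument requires the most care.
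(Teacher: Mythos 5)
Your case analysis of $A(x)\cap D(y)$ by the types of $x$ and $y$ is correct and is essentially the argument the paper intends: the corollary is stated as an immediate consequence of Theorem~\ref{thm:generic} with no separate proof, and your first two paragraphs supply exactly the missing bookkeeping (the diagonal intersections, the vacuous pairs, and the identification of the one- and two-dimensional cells via parts (ii)--(iv)). One subtlety you gloss over: the reason item~(3) explicitly adjoins the four isolated curves is that, because ascending curves merge along followed edges, arcs of the two saddle-to-unstable isolated curves already lie in $A(s)\cap D(u)$ (every point of the followed edge beyond the saddle lies on merged curves originating at $s$), whereas no point of the two stable-to-saddle isolated curves does (forward continuation from a regular point is unique, and theirs terminates at the saddle, not at $u$). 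So ``adjoining the four bounding edges'' is forced for two of the four arcs and is a genuine enlargement only for the other two; as written, your description does not distinguish these cases.

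The genuine gap is in your last paragraph, as you yourself acknowledge. Two points. First, the claim that a merged face ``becomes an annular region'' is unjustified and most likely not the right model: in the merging scenario of Figure~\ref{fig:open} the two saddle-to-unstable curves coincide from the merge vertex $q$ onward, the open slit between $q$ and $u$ is itself contained in $A(s)\cap D(u)$ (again by merging along the followed edge), and what is removed from the enclosing disc is a finite point set rather than an arc; determining the homeomorphism type requires exactly this local analysis and cannot be waved at. Second, for the equivalence with the failure of the CW property you argue only through the faces, but when two isolated ascending curves merge the corresponding edge cells overlap in a nondegenerate arc, so the family of cells already fails to be a CW-decomposition at the level of the $1$-skeleton; a complete argument must show that this failure and the non-disc face occur simultaneously (they do, since edges merge exactly when two bounding curves of a face merge). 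Neither point is addressed in the paper either --- the closing remark is stated without proof --- but your sketch does not yet establish the stated equivalence.
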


\begin{definition}\label{defn:MSgraph}
Let $P \subset \Re^3$ be a generic convex polyhedron. The topological graph $G$ on $\partial P$ whose vertices are the equilibrium points of $P$, and whose edges are the isolated ascending curves of $P$, is called the \emph{Morse\==Smale graph} generated by $P$. This graph can be regarded as a \emph{$3$-colored} quadrangulation of $\Sph^2$, where the `colors' of the vertices are the three types of a critical point.
\end{definition}

\section{An algorithm to compute Morse\==Smale complex}\label{sec:algorithm}

It is a natural problem to determine the Morse\==Smale graph $G$ of a generic convex polyhedron $P$.
Our results in Sections~\ref{subsec:local} and \ref{subsec:global} provide an algorithmic answer for this problem:

\medskip

\begin{enumerate}
\item[Step 1]: Find the orthogonal projections of $o$ onto the line of each edge of $P$.
\item[Step 2]: Find which edge is followed, which edge is crossed.
\item[Step 3]: Find the saddle points of $P$.
\item[Step 4]: For each saddle point, find the ascending curves ending there. Note that by genericity, there are exactly two such curves for each saddle point, and they intersect only crossed edges.
\item[Step 5]: For each saddle point, find the ascending curves starting there.
\end{enumerate}

\medskip

We implemented the above algorithm and we illustrate it on the following example. 
Let $P_\text{ex}$ be a polyhedron constructed as follows:
\begin{enumerate}
\item  Consider a regular octagon $O_1 \subset \Re^3$ with diameter $d=2$, lying in a plane parallel to the $(x,y)$ plane and with center $(0,0,1)$. Denote the vertices of $O_1$ by $v_i$, $(i=1,2, \ldots, 8)$.

\item Project $O_1$ orthogonally onto the plane $z=-1$ to obtain $O_2$.

\item Rotate $O_2$ by $\varphi=\frac{\pi}{20}$ clockwise and denote the vertices of the rotated octagon by $v_i$, $(i=9,10, \ldots, 16)$.

\item Add the points $v_0=(0,0,1.2), v_{17}=(0,0,-1.2)$.

\item  Define $P_\text{ex}$ as the convex hull of the points $\{v_i\}$, $(i=0,1, \ldots, 17)$.

\end{enumerate}

To equip $P_\text{ex}$ with a radial distance function, we select the point $o_\text{ex}=(0.5,0.5,0.5)$ as origin (in this case the origin, as a center of mass, corresponds to inhomogeneous material distribution).  Figure~\ref{fig:example} shows the ascending curves of $P_\text{ex}$ and we can see that $P_\text{ex}$ is generic.

\begin{figure*}[h]
    \centering
    \includegraphics[width=.7\textwidth]{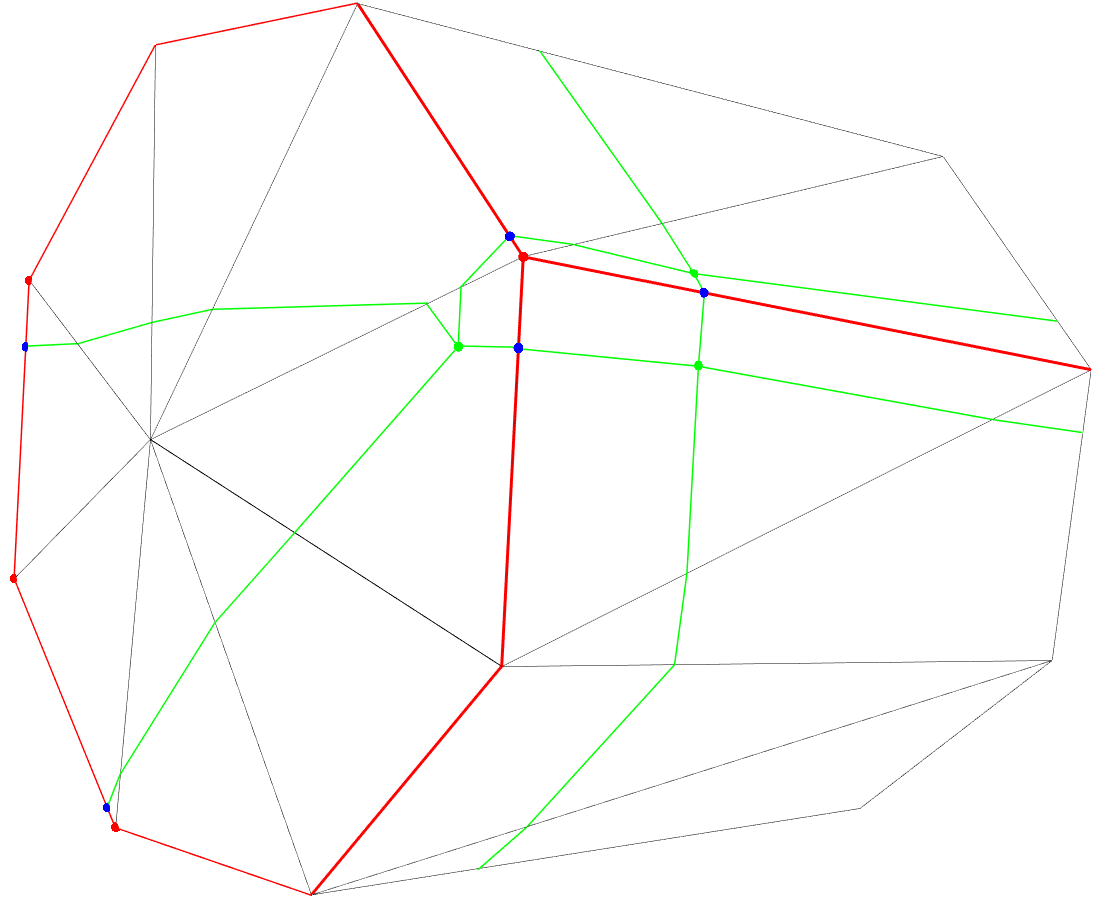}
    \caption{Isolated ascending curves on $P_\text{ex}$ with respect to $o_\text{ex}$. Red, green and blue points represent unstable, stable and saddle\=/type equilibria, respectively. Stable\=/saddle isolated ascending curves are shown in green, saddle\=/unstable isolated ascending curves are shown in red.}\label{fig:example}
\end{figure*}

\section{Concluding remarks and summary}\label{sec:remarks}

Before summarizing our results we make some general remarks and state an open question.

\subsection{Some remarks and a question}

\begin{remark}
We have seen in Theorem~\ref{thm:open} that for any nondegenerate convex polyhedron, the descending manifold of any unstable point is open. Nevertheless, there are even generic convex polyhedra with stable points whose ascending manifolds are not open. This happens, for example, if two ascending curves in the boundary of a $2$-cell of the Morse\==Smale complex merge. Figure~\ref{fig:open} shows a face of $P$ where this situation can arise.
\end{remark}

\begin{figure}[h]
  \centering
  \begin{tikzpicture}
    \draw (-30:2) -- (90:2) -- (210:2) -- cycle;
    \fill (0:0) circle (0.1);
    \fill (30:1) circle (0.1);
    \fill (150:1) circle (0.1);
    \fill (90:2) circle (0.1);
    \draw[very thick, dashed] (0:0) -- (30:1) (0:0) -- (150:1) (30:1) -- (90:2) (150:1) -- (90:2);
    \draw (-90:0.3) node {$s$};
    \draw (30:1.4) node {$h_1$};
    \draw (150:1.4) node {$h_2$};
    \draw (90:2.3) node {$q$};
  \end{tikzpicture}
  \caption{The saddle points $h_1$ and $h_2$ are destinations of two isolated ascending curves from the stable point $s$. They are also origins of isolated ascending curves passing through the vertex $q$. If $q$ is unstable then the $2$-cell bounded by the previously mentioned isolated ascending curves is open, otherwise it is not.}\label{fig:open}
\end{figure}
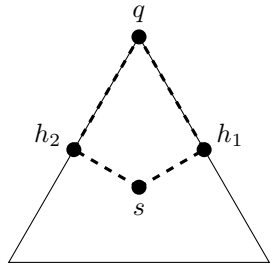

Before our next remark, let us recall a few concepts from Baire category theory. A topological space is called a \emph{Baire space}, if the union of every countably many closed sets with empty interior has empty interior. In particular, the family of convex polyhedra with at most $n$ vertices is a complete metric space, and thus, it is a Baire space \cite{schneider}. A set in a complete metric space is called \emph{typical} or \emph{residual} if its complement is a countable union of nowhere dense sets. In particular, if a set in a complete metric space is open and everywhere dense, it is typical.

\begin{remark}
Let $\mathcal{P}_m$ denote the family of convex polyhedra in $\Re^3$ with at most $m$ vertices. Observe that  $\mathcal{P}_m$ is a complete metric space with respect to Hausdorff distance, and thus, it is a Baire space. Furthermore, a sufficiently small perturbation of the vertices of a generic convex polyhedron with $m$ vertices yields a generic polyhedron with $m$ vertices, and any convex polyhedron with at most $m$ vertices can be approximated arbitrarily well by such a polyhedron. Thus, in Baire category sense, a typical convex polyhedron in  $\mathcal{P}_m$ with at most $m$ vertices is a generic convex polyhedron with $m$ vertices.
\end{remark}

\begin{remark}
The computation of Morse\==Smale complexes is a fast growing area in data visualization and some algorithms have even
 benefited from GPU acceleration \cite{parallel}. Apparently, the speed of such algorithms is essential, we make some
simple observations about our algorithm.
Note that if $P$ has $n$ vertices, it has at most $2n-4$ faces and $3n-6$ edges, and each vertex, edge, face contains at most one equilibrium point.
Thus, Steps 1-3 in our algorithm can be carried out in $O(n)$ time, and by Theorem~\ref{thm:polygonal}, Steps 4-5 can be carried out in $O(n^2)$ time. 
Clearly, at least $\Omega(n)$ steps are necessary to find the Morse\==Smale graph of $P$.  While the exact magnitude of a fastest algorithm that solves this problem is an open question (see below), we note that on all test examples (with a couple hundred vertices per polyhedron on average) our algorithm, running on a
single core of a standard laptop CPU, computed the Morse\==Smale complex well under a second.
\end{remark}

\begin{problem}
Find the exact magnitude of a fastest algorithm that finds the Morse\==Smale graph of any generic convex polyhedron. 
\end{problem}

  \begin{remark}\label{rem:flow_cmp}
Let $X \in \Re^3$ be a finite point set, and for any point $q \in \Re^3$, set $h : \Re^3 \to \Re$, $h(q) = \min \{ ||q-x||^2 : x \in X \}$. This function is a piecewise smooth function which is not smooth exactly at the boundary points of the Voronoi cells of the points of $X$. In \cite{giesen2003flow}, the authors considered the gradient vector field in $\Re^3$ defined by $h$, and examined the properties of the $3$-dimensional Morse\==Smale complex associated to it. An important tool of their decription was a geometric interpretation of the direction of the steepest ascent, which can be stated as follows: Consider a point $q \in \Re^3$. Then there is a unique smallest dimensional Voronoi object $V$ (vertex, edge, face or cell) in the Voronoi decomposition of $\Re^3$ by $X$ which contains $q$. Let $D$ be Delaunay object in the Delaunay decomposition of $\conv (X)$ by $X$ dual to $V$, and let $x$ be the point of $D$ closest to $q$, which is called the \emph{driver} of $q$. Then the direction of the steepest ascent of $h$ at $q$ is the direction of the vector $q-x$. In the point set problem mentioned in the introduction, these directions, viewed as a (not everywhere smooth) vector field in $\Re^3$, are used to define ascending curves through noncritical points, which is then applied to define the flow complex associated to the point set.

The results in \cite{giesen2003flow} are especially interesting if we recall that if $P$ is a convex polyhedron in $\Re^3$ with $x \in \inter (P)$, then $P$ is the Voronoi cell of the point set $X$ consisting of $x$ and its reflected copies to all face planes of $P$. Furthermore, for any $x \in \Re^3$ the gradient vector field defined by $q \mapsto ||q-x||$ coincides with the normalized gradient vector field defined by $q \mapsto ||q-x||^2$. On the other hand, an important difference between this model and our setting is that in the point set problem, the direction of steepest ascent is considered among all directions in $\Re^3$, whereas in our paper it is restricted to $\partial P$.

We illustrate this difference by the following example (see Figure~\ref{not_voronoi}) which shows that in the flow complex defined by $P$ in the previous paragraph, an ascending curve may leave $\partial P$. First, let the reference point in $P$ be the origin $o$. Let $W$ be a (convex) wedge bounded by the half planes $H_1$ and $H_2$ perpendicular to the $(x,y)$-plane such that $W$ is symmetric to the $(y,z)$-plane and it contains $o$ in its interior. For $i=1,2$, let $o_i$ denote the reflection of $o$ to the plane containing $H_i$. These points are in the $(x,y)$-plane. Let $L=H_1 \cap H_2$. Then, with respect to the point set $\{ o,o_1, o_2\}$, $L$ is a Voronoi edge whose associated Delaunay face is the triangle with vertices $o,o_1,o_2$. Thus, if $d$ denotes the midpoint of the segment $[o_1,o_2]$, then for any $q \in L$ the direction of the steepest ascent points in the direction of $q-d$ from $q$, and thus, it leaves the wedge $W$. The direction of steepest ascent would coincide with $L$ iff $L$ intersects the triangle with vertices $o,o_1,o_2$. Now, we construct $P$ by truncating $W$ with planes in such a way that the obtained convex polyhedron $P$ has an edge lying in $L$, providing the desired example.
\end{remark}

\begin{figure}[h]
  \centering
  \begin{tikzpicture}
    \draw[dashed] (2,0,0) -- (-1,0,1.5) -- (-1,0,-1.5) -- cycle;
    \draw[loosely dotted] (0,0,0) -- (2,0,0);
    \draw[-Latex] (2,0,0) -- (3,0,0);
    \draw (3.2,0,0) node {$y$};
    \draw[loosely dotted] (0,0,0) -- (0,1.5,0);
    \draw[-Latex] (0,1.5,0) -- (0,3,0);
    \draw (0,3.2,0) node {$z$};
    \draw[loosely dotted] (0,0,0) -- (0,0,1);
    \draw[-Latex] (0,0,1) -- (0,0,3);
    \draw (0,0,3.3) node {$x$};
    \draw (0.9,1.5,0) node {$W$};
    \draw[very thick, dashed] (2,-3,0) -- (2,3,0);
    \draw (2.2,3,0) node {$L$};
    \fill (0,0,0) circle (0.1);
    \draw (-0.3,0,0) node {$o$};
    \fill (1,0,2) circle (0.1);
    \draw (1.3,-0.1,2) node {$o_1$};
    \fill (1,0,-2) circle (0.1);
    \draw (0.7,0.1,-2) node {$o_2$};
    \draw[densely dotted] (1,0,2) -- (1,0,-2) -- (0,0,0) -- cycle;
    \fill (1,0,0) circle (0.05);
    \draw (0.8,0,0) node {$d$};
    \draw (2,1.5,0) -- (-1,1.5,1.5) -- (-1,1.5,-1.5) -- cycle;
    \draw (2,-1.5,0) -- (-1,-1.5,1.5);
    \draw (-1,1.5,1.5) -- (-1,-1.5,1.5);
    \draw (2,1.5,0) -- (2,-1.5,0);
    \foreach \y in {-1.6,-1.7,...,-2.5} {
      \draw[gray] (2,\y,0) -- (1,\y,0.5);
    }
    \fill[white] (1.5,-2,0.25) circle (0.3);
    \draw (1.5,-2,0.25) node {$H_1$};
    \foreach \y in {1.6,1.7,...,2.5} {
      \draw[gray] (2,\y,0) -- (1,\y,-0.5);
    }
    \fill[white] (1.5,2,-0.25) circle (0.3);
    \draw (1.5,2,-0.25) node {$H_2$};
    \fill (2,-1,0) circle (0.08);
    \draw (1.7,-1,0) node {$q$};
    \draw[-Latex] (2,-1,0) -- (2.5,-1.6,0);
    \draw (4,-1.5,0) node {direction of $q-d$};
    \draw[densely dotted] (1,0,0) -- (2,-1,0);
  \end{tikzpicture}
  \caption{The wedge $W$ of Remark~\ref{rem:flow_cmp} truncated by two planes parallel to the $(x,y)$-plane and one parallel to the $(x,z)$-plane, forming the polyhedron $P$. $P$'s intersection with the $(x,y)$-plane is shown with dashed lines.}\label{not_voronoi}
\end{figure}

\subsection{Summary}

Motivated by geophysical applications, in this paper we extended Morse\==Smale theory to a convex polyhedron $P$, interpreted as a scalar
(radial) distance measured from a point $o$ in its interior. We defined the polyhedral Morse\==Smale complex
and the Morse\==Smale graph $G$ associated with $P$ and $o$ and we built an algorithm to compute $G$. In geophysical applications, $o$ is the center of mass of the particle
and the scanned image of the particle is a polyhedron, so this algorithm associates one topological graph to a scanned
particle.

In classical Morse\==Smale theory the Morse\==Smale complex is defined by a function $f: \mathcal{M} \to \Re$, acting
on the compact manifold $\mathcal{M}$ and in the classical case $f$ is considered to be smooth. 
From the abstract point of view, the natural extension of this theory is
to extend the class of functions $f$ to be considered and this path was followed in \cite{banchoff_polyhedral} and \cite{edelsbrunner_morse_smale}
where the authors considered piecewise linear functions. Here we took a slightly different approach: instead of defining
a new class of functions we would discuss, we considered the geometric image of the function as a given object.
In our case, this image was a convex polyhedron and the function $f$ was defined on the sphere as a radial
distance, measured from a point. We hope that this
geometric approach may lead to other potential generalizations of Morse\==Smale theory.

\section*{Funding, competing interests, and data availability statement} 

\noindent
\textbf{Funding}: The research reported in this paper was supported by the BME Water Sciences \& Disaster Prevention TKP2020 Institution Excellence Subprogram, grant no. TKP2020 BME-IKA-VIZ, the NKFIH grant K134199, the J\'anos Bolyai Research Scholarship of the Hungarian Academy of Sciences, and the \'UNKP-20-5 New National Excellence Program by the Ministry of Innovation and Technology.

\noindent
\textbf{Competing interests}: The authors have no competing interests to declare that are relevant to the content of this article.

\noindent
\textbf{Data availability statement}: In this paper we used computer generated datasets to illustrate our findings, however, none of the results presented in the paper do rely on any of the computer generated datasets.
The datasets generated and/or analysed during the current study are available from the corresponding author on reasonable request.

\bibliographystyle{plain}
\bibliography{bib}

\begin{thebibliography}{10}

\bibitem{arnold_counting}
Vladimir~I. Arnold.
\newblock Topological classification of {M}orse functions and generalisations
  of {H}ilbert's 16-th problem.
\newblock {\em Math. Phys. Anal. Geom.}, 10(3):227--236, 2007.

\bibitem{banchoff_polyhedral}
T.~F. Banchoff.
\newblock Critical points and curvature for embedded polyhedral surfaces.
\newblock {\em Amer. Math. Monthly}, 77:475--485, 1970.

\bibitem{cazals2003molecular}
Fr{\'e}d{\'e}ric Cazals, Fr{\'e}d{\'e}ric Chazal, and Thomas Lewiner.
\newblock Molecular shape analysis based upon the {M}orse-{S}male complex and
  the {C}onnolly function.
\newblock In {\em Proceedings of the nineteenth annual symposium on
  Computational geometry}, pages 351--360, 2003.

\bibitem{dey2003flow}
Tamal~K. Dey, Joachim Giesen, and Samrat Goswami.
\newblock Shape segmentation and matching with flow discretization.
\newblock In {\em Algorithms and data structures}, volume 2748 of {\em Lecture
  Notes in Comput. Sci.}, pages 25--36. Springer, Berlin, 2003.

\bibitem{natural}
G\'abor Domokos.
\newblock Natural numbers, natural shapes.
\newblock {\em Axiomathes}, 32:743--763, 2022.

\bibitem{holmes}
G\'{a}bor Domokos, Philip Holmes, and Zsolt L\'{a}ngi.
\newblock A genealogy of convex solids via local and global bifurcations of
  gradient vector fields.
\newblock {\em J. Nonlinear Sci.}, 26(6):1789--1815, 2016.

\bibitem{balancing}
G\'{a}bor Domokos, Fl\'{o}ri\'{a}n Kov\'{a}cs, Zsolt L\'{a}ngi, Krisztina
  Reg\H{o}s, and P\'{e}ter~T. Varga.
\newblock Balancing polyhedra.
\newblock {\em Ars Math. Contemp.}, 19(1):95--124, 2020.

\bibitem{Langi2019}
G\'abor Domokos and Zsolt L\'angi.
\newblock The isoperimetric quotient of a convex body decreases monotonically
  under the {E}ikonal abrasion model.
\newblock {\em Mathematika}, 65(1):119--129, 2019.

\bibitem{domokos_morse_smale}
G\'{a}bor Domokos, Zsolt L\'{a}ngi, and T\'{\i}mea Szab\'{o}.
\newblock A topological classification of convex bodies.
\newblock {\em Geom. Dedicata}, 182:95--116, 2016.

\bibitem{asteroid}
G\'abor Domokos, Andr\'as~{\'{A}rp{\'a}d} Sipos, Gyula Szab{\'{o}}, and P\'eter
  V{\'{a}}rkonyi.
\newblock Formation of sharp edges and planar areas of asteroids by polyhedral
  abrasion.
\newblock {\em The Astrophysical Journal}, 699(1):L13--L16, jun 2009.

\bibitem{pebble}
G\'abor Domokos, Andr\'as~{\'{A}}rp{\'a}d Sipos, T{\'\i}mea Szab\'o, and
  P\'eter V\'arkonyi.
\newblock Pebbles, shapes and equilibria.
\newblock {\em Math. Geosci.}, 42:Article No. 29, 2010.

\bibitem{dong2006spectral}
Shen Dong, Peer-Timo Bremer, Michael Garland, Valerio Pascucci, and John~C
  Hart.
\newblock Spectral surface quadrangulation.
\newblock In {\em ACM SIGGRAPH 2006 Papers}, pages 1057--1066. 2006.

\bibitem{Edelsbrunner2003}
Herbert Edelsbrunner.
\newblock Surface reconstruction by wrapping finite sets in space.
\newblock In {\em Discrete and computational geometry}, volume~25 of {\em
  Algorithms Combin.}, pages 379--404. Springer, Berlin, 2003.

\bibitem{edelsbrunner_morse_smale}
Herbert Edelsbrunner, John Harer, and Afra Zomorodian.
\newblock Hierarchical {M}orse-{S}male complexes for piecewise linear
  2-manifolds.
\newblock volume~30, pages 87--107. 2003.
\newblock ACM Symposium on Computational Geometry (Medford, MA, 2001).

\bibitem{feng2013feature}
Wei Feng, Jin Huang, Tao Ju, and Hujun Bao.
\newblock Feature correspondences using {M}orse {S}male complex.
\newblock {\em Vis. Comput.}, 29(1):53--67, 2013.

\bibitem{forman}
Robin Forman.
\newblock A user's guide to discrete {M}orse theory.
\newblock {\em Seminaire Lotharingien de Combinatoire}, 48:B48c, 2002.

\bibitem{giesen2003flow}
Joachim Giesen and Matthias John.
\newblock The flow complex: a data structure for geometric modeling.
\newblock {\em Comput. Geom.}, 39(3):178--190, 2008.

\bibitem{gyulassy}
Attila Gyulassy, Peer-Timo Bremer, Bernd Hamann, and Valerio Pascucci.
\newblock A practical approach to {M}orse-{S}male complex computation:
  Scalability and generality.
\newblock {\em IEEE Trans. Vis. Comput. Graph.}, 14(6):1619--1626, 2008.

\bibitem{Conway}
Zsolt L\'{a}ngi.
\newblock A solution to some problems of {C}onway and {G}uy on monostable
  polyhedra.
\newblock {\em Bull. Lond. Math. Soc.}, 54(2):501--516, 2022.

\bibitem{ludmany}
Balázs Ludmány and Gábor Domokos.
\newblock Identification of primary shape descriptors on 3d scanned particles.
\newblock {\em Period. Polytech. Electr. Eng. Comput. Sci.}, 62(2):59--64,
  2018.

\bibitem{milnor}
J.~Milnor.
\newblock {\em Morse theory}.
\newblock Annals of Mathematics Studies, No. 51. Princeton University Press,
  Princeton, N.J., 1963.

\bibitem{Szabo2018}
T{\'\i}mea Nov{\'a}k-Szab{\'o}, Andr{\'a}s~{\'A}rp{\'a}d Sipos, Sam Shaw,
  Duccio Bertoni, Alessandro Pozzebon, Edoardo Grottoli, Giovanni Sarti, Paolo
  Ciavola, G{\'a}bor Domokos, and Douglas~J. Jerolmack.
\newblock Universal characteristics of particle shape evolution by bed-load
  chipping.
\newblock {\em Science Advances}, 4(3), 2018.

\bibitem{schneider}
Rolf Schneider.
\newblock {\em Convex bodies: the {B}runn-{M}inkowski theory}, volume 151 of
  {\em Encyclopedia of Mathematics and its Applications}.
\newblock Cambridge University Press, Cambridge, expanded edition, 2014.

\bibitem{parallel}
Varshini Subhash, Karran Pandey, and Vijay Natarajan.
\newblock {G}{P}{U} parallel computation of {M}orse-{S}male complexes.
\newblock In {\em 2020 IEEE Visualization Conference (VIS)}, pages 36--40,
  2020.

\bibitem{uhlenbeck}
K.~Uhlenbeck.
\newblock Morse theory by perturbation methods with applications to harmonic
  maps.
\newblock {\em Trans. Amer. Math. Soc.}, 267(2):569--583, 1981.

\bibitem{varkonyi_gomboc}
P\'eter V\'arkonyi and G\'abor Domokos.
\newblock Static equilibria of rigid bodies: Dice, pebbles, and the
  {P}oincare-{H}opf theorem.
\newblock {\em J. Nonlinear Sci.}, 16:255--281, 2006.

\bibitem{witten}
Edward Witten.
\newblock Supersymmetry and {M}orse theory.
\newblock {\em J. Differential Geom.}, 17(4):661--692 (1983), 1982.

\end{thebibliography}

\end{document}